\newtheorem{theorem}{Theorem}
\renewcommand{\algocf@captiontext}[2]{#1\algocf@typo. \AlCapFnt{}#2} % text of caption
\def\@algocf@capt@plain{top}
\renewcommand{\algocf@makecaption}[2]{%
  \addtolength{\hsize}{\algomargin}%
  \sbox\@tempboxa{\algocf@captiontext{#1}{#2}}%
  \ifdim\wd\@tempboxa >\hsize%     % if caption is longer than a line
    \hskip .5\algomargin%
    \parbox[t]{\hsize}{\algocf@captiontext{#1}{#2}}% then caption is not centered
  \else%
    \global\@minipagefalse%
    \hbox to\hsize{\box\@tempboxa}% else caption is centered
  \fi%
  \addtolength{\hsize}{-\algomargin}%
}
\newcommand{\bbeta}{ \mbox{\boldmath $ \beta $} }
\newcommand{\R}{\mathbb{R}}
\renewcommand{\S}{\mathbb{S}}
\renewcommand{\ss}{\mathbf{s}}
\newcommand{\bone}{\textbf{1}}
\newcommand{\bB}{\textbf{B}}
\newcommand{\bE}{\textbf{E}}
\newcommand{\bF}{\textbf{F}}
\newcommand{\bh}{\textbf{h}}
\newcommand{\bs}{\textbf{s}}
\newcommand{\bw}{\textbf{w}}
\newcommand{\bx}{\textbf{x}}
\newcommand{\bX}{\textbf{X}}
\newcommand{\bY}{\textbf{Y}}
\title{Towards a Complete Picture of Stationary Covariance Functions on Spheres Cross Time}
\author{Philip A. White and Emilio Porcu}
\date{2019}
\begin{document}

\maketitle

\begin{abstract}
With the advent of wide-spread global and continental-scale spatiotemporal datasets, increased attention has been given to covariance functions on spheres over time.
This paper provides results for stationary covariance functions of random fields defined over $d$-dimensional spheres cross time. Specifically, we provide a bridge between the characterization in \cite{berg-porcu} for covariance functions on spheres cross time and Gneiting's lemma \citep{gneiting2002} that deals with planar surfaces.  

We then prove that there is a valid class of covariance functions similar in form to the Gneiting class of space-time covariance functions \citep{gneiting2002} that replaces the squared Euclidean distance with the great circle distance. Notably, the provided class is shown to be positive definite on every $d$-dimensional sphere cross time, while the Gneiting class is positive definite over $\R^d \times \R$ for fixed $d$ only. 

In this context, we illustrate the value of our adapted Gneiting class by comparing examples from this class to currently established nonseparable covariance classes using out-of-sample predictive criteria. These comparisons are carried out on two climate reanalysis datasets from the National Centers for Environmental Prediction and National Center for Atmospheric Research. For these datasets, we show that examples from our covariance class have better predictive performance than competing models.
\end{abstract}
\noindent\textsc{Keywords}: {Bayesian statistics, Covariance functions, Global data, Great circle distance, Spatiotemporal statistics, Sphere}

\section{Introduction}\label{sec:intro}

In recent years, there has been a sharp increase in the prevalence of global or continental-scale spatiotemporal data due to satellite imaging, climate reanalyses, and wide-spread monitoring networks. Although Earth is not exactly spherical (it flattens at the pole), it is commonly believed that the Earth can be well approximated by a sphere \citep{gneiting2013, cas13}. With the goal of modeling data over large spatial scales, while accounting for the geometry of the Earth, there has recently been fervent research on modeling and inference for random fields on spheres as well as on spheres cross time. Recent examples provide a comprehensive overview of these topics, including \cite{gneiting2013}, \cite{jeong-jun}, \cite{Porcu-Bevilacqua-Genton}, \cite{berg-porcu}, and \cite{PAF}.

 Under Gaussianity, the covariance function is core to spatiotemporal modeling, inference, and prediction. Covariance functions are positive definite and showing that a candidate function is positive definite over spheres cross time
 often requires mathematical tools from harmonic analysis. Following the works of \cite{schoenberg} and \cite{gneiting2013} on spheres, the mathematical characterization of covariance functions on spheres cross time has been given by \cite{berg-porcu}. In addition, \cite{Porcu-Bevilacqua-Genton} provide examples of covariance functions for practitioners. As a special case of these covariance classes, some have adapted these classes for temporal models on circles (one-dimensional spheres) to account for seasonal patterns in temporal autocorrelation \citep[see][]{shirota2017,white2019}. Generalizations in the area of mathematical analysis include \cite{Guella3, Guella2, guella1} and \cite{Barbosa}.

For a random field on $\R^d \times \R$ with stationary covariance function $C: \R^d \times \R \to \R$, \cite{gneiting2002} showed the following characterization: if $C$ is continuous, bounded, symmetric, and integrable (over $\R^d$), then $C$ is a covariance function if and only if the function $C_{\boldsymbol{\omega}} :\R \to \R$, defined by
\begin{equation}
\label{criterion}   C_{\boldsymbol{\omega}}(u) = \int_{\R^d} {\rm e}^{\mathsf{i} \mathbf{h}^{\top}\boldsymbol{\omega}} C(\mathbf{h},u) {\rm d} \mathbf{h}, \qquad u \in \R, 
\end{equation}
is a covariance function for almost every $\boldsymbol{\omega} \in \R^d$. Here, $\mathsf{i}$ is the unit imaginary number and $\top$ is the transpose operator. This characterization has been the crux of many important results in spatiotemporal covariance modeling. Examples include the Gneiting class \citep{gneiting2002}, Schlather's generalized class \citep{schlather}, component-wise anisotropic covariances \citep{Porcu2006}, multivariate geostatistical modeling \citep{apa}, quasi-arithmetic construction \citep{porcu-mateu-christakos} and nonstationary models \citep{porcu-mateu-bevilacqua}. 
 For a given positive integer $d$, \eqref{criterion} proves the validity of the Gneiting class of covariance functions: 
\begin{equation}
\label{gneiting-euclidean} 
C(\mathbf{h} ;u) = \frac{\sigma^2}{\psi( u^2 )^{d/2} } \varphi \left( \frac{ \|\mathbf{h}\|^2 }{\psi( u^2)} \right), \qquad \mathbf{h} \in \R^d, u \in \R,
\end{equation}
where $\|\cdot\|$ is the Euclidean norm. The function $\varphi: [0,\infty) \to \R_+$ is completely monotonic; that is, $\varphi$ is infinitely differentiable on $(0,\infty)$, satisfying $(-1)^n \varphi^{(n)}(t) \ge 0$, $n \in \mathbb{N}$. Here, $\varphi^{(n)}$ denotes $n$th derivative and we use $\varphi^{(0)}$ for $\varphi$, where $\varphi(0)$ is required to be finite. The function $\psi: [0,\infty) \to \R_+$ is strictly positive with a completely monotonic derivative. Here and throughout, $\sigma^2$ is used to represent the spatiotemporal variance; that is, a scaling factor of a spatiotemporal correlation function. We also note that the function $C$ in (\ref{gneiting-euclidean}) is positive definite in $\R^d \times \R$ for a given positive integer $d$, but it is not positive definite on every $d$-dimensional Euclidean space cross time. 

Our paper focuses on two aspects of covariance modeling on $d$-dimensional spheres cross time. We first focus on Criterion \eqref{criterion} and its analogue on spheres over time. Our result provides an additional equivalence condition to those provided in \cite{berg-porcu}. We then provide an adaptation of the Gneiting class \eqref{gneiting-euclidean} to spherical domains, and show that it is positive definite over all $d$-dimensional spheres (including the Hilbert sphere) cross time. Further, our proof is based on direct construction, allows us to avoid Fourier inversion, and does not require a convergence argument that was originally used by \cite{gneiting2002}. \cite{Porcu-Bevilacqua-Genton} considered a variant of this problem, modifying the Gneiting class based on temporal rescaling of the spatial component. This idea was also suggested by \cite{gneiting2002}. In addition to a new Gneiting class for spheres over time, we adapt Heine's class of covariance functions \citep{heine}, originally proposed over two-dimensional Euclidean spaces, to $d$-dimensional spheres cross time. 

For estimation and prediction with the new covariance class, \eqref{eq:gneiting-nuovo}, we take a Bayesian approach using nearest neighbor Gaussian processes (NNGP) \citep{datta2016a,datta2016c}. Bayesian models allow for simple and rigorous uncertainty quantification through a single probabilistic framework that does not rely on asymptotic assumptions. Because Gaussian process (GP) models for large datasets, as we have with globally sampled spatiotemporal data, are often computationally intractable, we use the NNGP as a surrogate. Modeling with the NNGP enables scalable model fitting, inference, and prediction for real-data examples. Our discussion here adds to application areas for NNGPs as they have not been used for global data in the literature. 

For our data examples, we use daily near-surface temperature and cloud coverage from the first week of 2017 \citep{kalnay1996}. We only use the first week to keep computation times short. To be clear, we do not claim that covariance functions from our new covariance classes are preferable for all datasets. Indeed, for some datasets that we tested, but that we do not present here, the new covariance functions in this manuscript showed little or no predictive advantage. However, we highlight these datasets because they show that our new Gneiting class yields practical predictive benefits in some cases.

We start by giving background (Section \ref{Sec2}) for the theoretical results given in Section \ref{sec:theory}. In Section \ref{sec:theory}, we provide the analogue of \eqref{criterion} for covariance functions over spheres cross time. Additionally, we adapt the Gneiting class \eqref{gneiting-euclidean} to spheres cross time and show that, using a subclass of completely monotonic functions, the adapted Gneiting class can be used on all $d$-dimensional spheres cross time. Then, we provide an adaptation of the Heine covariance function, originally proposed in $\R \times \R$, to spheres cross time. Proofs of the theoretical results are technical and are deferred to the Appendix \ref{C}. 
%Appendix \ref{B} provides examples of functions that can be used in Theorem \ref{thm2}. 
We also provide a supplementary result in Appendix \ref{app:extra_thm} related to our main result in Section \ref{sec:theory}. We then turn our attention to modeling data using covariance functions from our adapted Gneiting class in Section \ref{sec:modeling}. In Section \ref{sec:Data}, we draw upon our modeling discussion for simulation studies and real data analyses. In our simulation studies, we explore parameter identifiability for examples from our adapted Gneiting covariance class and highlight some limitations. In our data examples, covariance functions from our adapted Gneiting covariance class have better out-of-sample predictive performance than covariance models currently in the literature, using mean absolute error, mean squared error, and continuous ranked probability scores as model comparison criteria. Finally, e provide concluding remarks in Section \ref{sec:conc}.

\section{Background}\label{Sec2}

%The paper considers Gaussian random on spheres $\S^d$ cross time. We denote them as $\{ Y(\ss,t), \quad \ss \in \S^d, t \in \R \}$, respectively. Theoretical results are presented by working on $d$-dimensional spheres, albeit the data illustrations will refer to $d=2$ only for obvious reasons. 
%
%The Gaussianity assumption implies the finite dimensional distributions to be completely specified by the first two order moments, that is the mean and the covariance function. 

Let $d$ be a positive integer. Here, we consider stationary Gaussian random fields on $d$-dimensional unit spheres $\S^d$ cross time (in $\R$), where $\S^d$ is defined to be $\{ \ss \in \mathbb{R}^{d+1} : \| \ss \| =  1 \}$. We use the unit sphere without loss of generality. These random fields are denoted $\{ Y(\ss,t), \, \ss \in \S^d, t \in \R \}$. We assume Gaussianity in modeling (Section \ref{sec:modeling}) which implies that finite dimensional distributions are completely specified by the mean and covariance function of the random field.

As a metric on $\S^d$, we use the great circle distance $\theta: \S^d \times \S^d \to [0,\pi]$, defined as the mapping
$$ (\ss_1,\ss_2) \mapsto \arccos \left ( \ss_1^{\top} \ss_2 \right ), \qquad \ss_1,\ss_2 \in \S^d.$$
We then consider covariance functions based on $\theta(\ss_1,\ss_2)$ and time difference $u = \mid t_1-t_2\mid$ ,
\begin{equation} \label{isotropy}
\text{cov} \left ( Y(\ss_1,t_1),Y(\ss_2,t_2) \right ) = C \left ( \theta(\ss_1,\ss_2), \mid t_1-t_2\mid \right ), \qquad (\ss_i,t_i) \in \S^d \times \R, 
\end{equation} 
where we take $\theta$ as an abbreviation for $ \theta(\ss_1,\ss_2)$. \cite{PAF} refer to such covariance functions as spatially geodesically isotropic and temporally symmetric, and \cite{berg-porcu} provide a mathematical characterization for these functions.

Covariance functions are positive definite, meaning that  for any collection $\{(\ss_k,t_k) \}_{k=1}^N \subset \S^d \times \R$ and constants $\{a_k \}_{k=1}^N$, $\sum_{k,h} a_k a_h C (\theta(\ss_k,\ss_h),\mid t_k-t_h\mid ) \ge 0$. It is worth noting that classes of positive definite functions on $\S^d \times \R$ are nested, meaning that positive definiteness on $\S^d \times \R$ implies positive definiteness on $\S^{d'} \times \R$ for $d'<d$, but the converse is not necessarily true. 

\cite{Porcu-Bevilacqua-Genton} proposed the {\em inverted} Gneiting class and define it as
\begin{equation}\label{eq:gneiting}
C(\theta ;u) = \frac{\sigma^2}{\psi_{[0,\pi]}( \theta )^{1/2} } \varphi \left( \frac{ u^2 }{\psi_{[0,\pi]}( \theta )} \right), \qquad \theta \in [0,\pi], u \in \R, 
\end{equation}
with $\varphi$ and $\psi$ as defined in \eqref{gneiting-euclidean}, and where $\psi_{[0,\pi]}$ denotes the restriction of $\psi$ to the interval $[0,\pi]$. 
%A list of the functions that can be used for this construction is provided in Tables \ref{table1-functions} and \ref{table2-functions} in Appendix \ref{app:functions}. 
In contrast to \eqref{gneiting-euclidean} which scales Euclidean distance by a function of the temporal lag, \eqref{eq:gneiting} rescales the temporal lag by a function of the great circle distance. This was also mentioned in \cite{gneiting2002}. It might be more intuitive to rescale space with time, as was done in \eqref{gneiting-euclidean}, proposing a structure like
\begin{equation}\label{eq:gneiting-nuovo}
C(\theta ;u) = \frac{\sigma^2}{\psi( u^2 )} \varphi \left( \frac{ \theta }{\psi( u^2 )} \right), \qquad \theta \in [0,\pi], u \in \R, 
\end{equation}
where, in this case, we do not need to restrict any of the functions $\varphi$ and $\psi$ to the interval $[0,\pi]$. Also, one might note that the function $\psi$ is not raised to the power $d/2$ as in \eqref{gneiting-euclidean}. Showing this construction is valid is nontrivial and receives an exposition in Section \ref{Sec3b}. 

One choice of $\varphi$, used to construct covariance functions in \eqref{gneiting-euclidean} and \eqref{eq:gneiting} is the Mat{\'e}rn class $\varphi(t)= {\cal M}_{\alpha,\nu}(t)$, $t \ge 0$, $\alpha,\nu>0$, defined as 
\begin{equation}
\label{matern} {\cal M}_{\alpha,\nu}(t) = \frac{2^{1-\nu}}{\Gamma(\nu)} \left ( \frac{t}{\alpha}\right )^{\nu} {\cal K}_{\nu } \left ( \frac{t}{\alpha}\right ), 
\end{equation}
where ${\cal K}_{\nu}$ is the MacDonald function \citep{grad}. One appeal of this class is the parameter $\nu$ that governs the smoothness at the origin \citep{stein-book}.   Unfortunately, ${\cal M}_{\alpha,\nu}(\theta)$ is not positive definite on $d$-dimensional spheres, unless $\nu \in (0,1/2]$ \citep{gneiting2013}, which makes this function less appealing to model spatial processes that are sufficiently smooth. 
 
\section{Theoretical Results}\label{sec:theory}

\subsection{The Generalized Gneiting Lemma on Spheres Cross Time} \label{Sec3a}

%We start with the following. Throughout, ${\cal G}_{n}^{\lambda}$ denotes the $n$th Gegenbauer polynomial of order $\lambda>0$ \citep{dai-xu}.
% \begin{theorem} \label{partial-gneiting}
%Let $d$ be a positive integer. Let $C:[0,\pi]\times \R \to \R$ be a continuous function such that $C(\theta,\cdot)$ is absolutely integrable on $\R$ for any fixed  $\theta \in [0,\pi]$. 
%Then, the following assertions are equivalent: 
%\begin{enumerate}
%\item $C(\theta,u)$ is the covariance function of a random field on $\S^d \times \R$;
%\item The function $C_{\tau}:[0,\pi] \to \R$, defined as 
%\begin{equation}
%\label{partial-gneiting2} C_{\tau}(\theta) = \int_{-\infty}^{+\infty} {\rm e}^{- \mathsf{i} u \tau} C(\theta,u) {\rm d} u
%\end{equation}
%is the covariance function of a random field on $\S^d$ for almost every $\tau \in \R$; 
%\item For all $n=0,1,2,\ldots$, the functions $b_{n,d}: \R \to \R$, defined as 
%\begin{equation}
%\label{sch-coeff} b_{n,d}(u) = \int_{0}^{\pi} C(\theta,u) {\cal G}_{n}^{(d-1)/2} (\cos \theta) \sin \theta^{d-1} {\rm d} \theta
%\end{equation} are continuous, positive definite on $\R$, and $\sum_{n} b_{n,d}(0) < \infty$.
%\end{enumerate}
%\end{theorem}

We begin our discussion with a criterion for covariance functions defined over $d$-dimensional spheres cross time. Let ${\cal G}_{k}^{\lambda}$ be the $k$th normalized Gegenbauer polynomial of order $\lambda>0$ \citep{dai-xu}. Gegenbauer polynomials form an orthonormal basis for the space of square-integrable functions ${\cal L}^2 ([0,\pi], \sin \theta^{d-1} {\rm d} \theta)$. 
 \begin{theorem} \label{partial-gneiting}
Let $d$ be a positive integer. Let $C:[0,\pi]\times \R \to \R$ be continuous and bounded with the $k$th related Gegenbauer transform, defined as
\begin{equation}
\label{sch-coeff} b_{k,d}(u) = \int_{0}^{\pi} C(\theta,u) {\cal G}_{k}^{(d-1)/2} (\cos \theta) \sin \theta^{d-1} {\rm d} \theta, \qquad u \in \R, k=0,1,\ldots,
\end{equation}
with $b_{k,d}: \R \to \R$, satisfying $\sum_{k=0}^{\infty} \int_{\R}\big|  b_{k,d}(u) \big|  {\rm d}u < \infty$.
Then, the following assertions are equivalent: 
\begin{enumerate}
\item[$1$.] $C(\theta,u)$ is the covariance function of a random field on $\S^d \times \R$;
\item[$2$.] the function $C_{\tau}:[0,\pi] \to \R$, defined as 
\begin{equation}
\label{partial-gneiting2} C_{\tau}(\theta) = \int_{-\infty}^{+\infty} {\rm e}^{- \mathsf{i} u \tau} C(\theta,u) {\rm d} u,
\end{equation}
is the covariance function of a random field on $\S^d$ for almost every $\tau \in \R$; 
\item[$3$.] for all $k=0,1,2,\ldots$, the functions $b_{k,d}: \R \to \R$, defined through \eqref{sch-coeff},
are continuous, positive definite on $\R$, and $\sum_{k} b_{k,d}(0) < \infty$.
\end{enumerate}
\end{theorem}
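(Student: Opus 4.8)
My plan is to prove the three-way equivalence by reducing every statement to a condition on the Gegenbauer coefficient functions $b_{k,d}$, using the Schoenberg--Berg--Porcu expansion as the backbone and Bochner's theorem to pass between the time and frequency domains. First I would record that the summability hypothesis $\sum_k \int_{\R}|b_{k,d}(u)|\,{\rm d}u<\infty$ forces the Gegenbauer series
\[
C(\theta,u)=\sum_{k=0}^{\infty} b_{k,d}(u)\,{\cal G}_{k}^{(d-1)/2}(\cos\theta)
\]
to converge absolutely and uniformly, so that term-by-term integration against the polynomials ${\cal G}_{k}^{(d-1)/2}$ and against $e^{-\mathsf{i}u\tau}$ is legitimate; continuity of each $b_{k,d}$ is then automatic from the continuity and boundedness of $C$ via dominated convergence. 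With the expansion in hand, the equivalence $1\Leftrightarrow 3$ is exactly the characterization of \cite{berg-porcu}: a continuous function is positive definite on $\S^d\times\R$ if and only if its Schoenberg functions $b_{k,d}$ are continuous positive definite functions on $\R$ with $\sum_k b_{k,d}(0)<\infty$.

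The substantive new content is the equivalence with statement $2$, and the key identity is that the $k$th Gegenbauer coefficient of $C_{\tau}$ coincides with the Fourier transform of $b_{k,d}$. Concretely, I would substitute the definition \eqref{partial-gneiting2} of $C_{\tau}$ into the Gegenbauer transform \eqref{sch-coeff} and interchange the order of integration, justified by Fubini's theorem under the $L^1$ summability, to obtain
\[
\int_{0}^{\pi} C_{\tau}(\theta)\,{\cal G}_{k}^{(d-1)/2}(\cos\theta)\,\sin\theta^{d-1}\,{\rm d}\theta
=\int_{-\infty}^{+\infty} e^{-\mathsf{i}u\tau}\,b_{k,d}(u)\,{\rm d}u=:\widehat{b}_{k,d}(\tau).
\]
Thus $C_{\tau}$ has Gegenbauer coefficients $\widehat{b}_{k,d}(\tau)$, and by the Schoenberg characterization on $\S^d$ the function $C_{\tau}$ is positive definite precisely when $\widehat{b}_{k,d}(\tau)\ge 0$ for every $k$, together with a finiteness condition controlling $C_{\tau}(0)$.

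To close the loop I would invoke Bochner's theorem in the form: an integrable continuous function $b_{k,d}$ on $\R$ is positive definite if and only if $\widehat{b}_{k,d}(\tau)\ge 0$ for almost every $\tau$. Running this equivalence for each fixed $k$ produces, for each $k$, a co-null set of frequencies on which $\widehat{b}_{k,d}\ge 0$; intersecting these countably many co-null sets yields a single co-null set on which all coefficients are simultaneously nonnegative, so that statement $3$ (each $b_{k,d}$ positive definite) is equivalent to statement $2$ ($C_{\tau}$ positive definite on $\S^d$ for almost every $\tau$). The summability $\sum_k b_{k,d}(0)<\infty$ transfers to the required finiteness of $C_{\tau}(0)$ through the inversion $b_{k,d}(0)=(2\pi)^{-1}\int_{\R}\widehat{b}_{k,d}(\tau)\,{\rm d}\tau$ and Tonelli's theorem.

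The main obstacle I anticipate is analytic bookkeeping rather than a single deep idea: namely, justifying all the interchanges of summation and integration, and correctly matching the ``almost every $\tau$'' quantifier across the infinitely many coefficients so that positive definiteness of $C_{\tau}$ holds on a common co-null set. Care is also needed to reconcile the normalization of the orthonormal polynomials ${\cal G}_{k}^{(d-1)/2}$ with the nonnegativity form of Schoenberg's theorem, and to ensure that $C_{\tau}$ is not merely positive definite but a bona fide finite-variance, continuous covariance function for almost every $\tau$.
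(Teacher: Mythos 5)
Your proposal is correct and follows essentially the same route as the paper: both rest on the identity that the $k$th Gegenbauer coefficient of $C_{\tau}$ equals the Fourier transform $\widehat{b}_{k,d}(\tau)$, combined with the Berg--Porcu/Schoenberg characterizations on $\S^d\times\R$ and $\S^d$ and Bochner's theorem on $\R$, with the summability hypothesis justifying the interchanges of sum and integral. The only difference is organizational -- you establish $1\Leftrightarrow 3\Leftrightarrow 2$ while the paper proves the implications $1\to 2$, $2\to 1$, $2\to 3$, $3\to 2$ separately -- but the underlying argument is the same.
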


Some comments are in order. Equivalence of {$1$} and $3$ was shown by \cite{berg-porcu}. The result completes the picture that had been started by \cite{gneiting2013}, \cite{berg-porcu} and \cite{Porcu-Bevilacqua-Genton}. Equivalence of $1$ and $2$ provides the analogue of Gneiting's criterion in \eqref{criterion} for spheres cross time. Thus, Theorem \ref{partial-gneiting} gives insight into relationships between covariance functions on spheres and covariance functions on Euclidean spaces. In fact, application of Theorem \ref{partial-gneiting} provides a useful criterion (see Appendix \ref{app:extra_thm}) relating spatiotemporal covariances on $\R \times \R$ with covariance functions on $\S^3 \times \R$. 

The proof of Theorem \ref{partial-gneiting} is technical, and we defer it to Appendix \ref{C} to avoid mathematical obfuscation. 

\subsection{New Classes of Covariance Functions on Spheres Cross Time} \label{Sec3b}

We now detail our findings with two new classes of covariance functions on spheres over time. To do this, we need to introduce a new class of special functions. A function $\varphi:[0,\infty) \to \R$ is called a Stieltjes function if 
\begin{equation}
\label{stieltjes} \varphi(t) = \int_{[0,\infty)} \frac{{\mu} ({\rm d} \xi)}{t+\xi}, \qquad t \ge0,
\end{equation}
where $\mu$ is a positive and bounded measure. We require throughout $\varphi(0)=1$, which implies that 
$ \int \xi^{-1} \mu ({\rm d } \xi) =1$. Let us call ${\cal S}$ the set of Stieltjes functions. It has been proved that ${\cal S}$ is a convex cone \citep{berg2008}, with the inclusion relation ${\cal S} \subset {\cal C}$, where ${\cal C}$ is the set of completely monotone functions. The relation \eqref{stieltjes} shows that the function $t \mapsto (1+t)^{-1}$, $t \ge 0$, is a Stieltjes function. Using the fact that $\varphi \in {\cal S}$ if and only if  $1/\varphi$ is a completely Bernstein function \citep[for a definition, see ][]{porcu-schilling} we can get a wealth of examples, as the book by \cite{SSV} provides an entire catalogue of completely Bernstein functions.  We finally note that completely Bernstein functions are infinitely differentiable over $(0,\infty) $ and have a completely monotonic derivative. 

We are now able to state the following result.
%\begin{proof}
%We start by noting the beautiful formula
%$$ \int_{0}^{\infty} {\rm e}^{-r x } {\rm e}^{- r \xi} {\rm d} r = \frac{1}{x+\xi}, \qquad \xi>0, \; x \ge 0. $$
% We now consider the function
%$$ H_{\xi}(t,u) = \int_{0}^{\infty} {\rm e}^{-r \theta } {\rm e}^{- r \xi h(u)} {\rm d} r = \frac{1}{h(u)} \left ( \xi +\frac{\theta}{h(u)}\right )^{-1} , \qquad (\theta,u) \in [0,\pi] \times [0,\infty), $$
%which is clearly a member of the class $\Phi(\S^{\infty} \times \R, \theta,|\cdot|)$ because the mapping $\theta \mapsto \exp(- r \theta)$ belongs to the class $\Phi(\S^{\infty},\theta)$ for all $r \ge 0$, and because $u \mapsto \exp(- \xi r \gamma(u)) \in \Phi(\R,|\cdot|)$ by assumption. Since $f \in {\cal S}$, we have that 
%$$ G_{1}(t,u) = \int_{0}^{\infty} H_{\xi}(t,u) \mu({\rm d} \xi), $$
%and this proves the assertion. 
%\end{proof}
\begin{theorem} \label{thm2}
Let $C: [0,\pi] \times \R $ be the mapping defined through \eqref{eq:gneiting-nuovo}, where $\varphi$ is a Stieltjes function on the positive real line, with $\varphi(0)=1$, and $\psi$ is strictly positive with a completely monotone derivative. Then, $C$ is a covariance function on $\S^d \times \R$ for all positive integers $d$.
\end{theorem}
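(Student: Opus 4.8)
The plan is to prove positive definiteness on $\S^d \times \R$ for every $d$ by exhibiting $C$ in \eqref{eq:gneiting-nuovo} as a scale mixture of known positive definite functions, using the Stieltjes representation \eqref{stieltjes} of $\varphi$. Writing $\varphi(t) = \int_{[0,\infty)} (t+\xi)^{-1} \mu(\mathrm{d}\xi)$ and substituting $t = \theta/\psi(u^2)$, the candidate covariance becomes
\begin{equation}
\label{plan-mixture}
C(\theta;u) = \frac{\sigma^2}{\psi(u^2)} \int_{[0,\infty)} \frac{\mu(\mathrm{d}\xi)}{\theta/\psi(u^2) + \xi} = \sigma^2 \int_{[0,\infty)} \frac{\mu(\mathrm{d}\xi)}{\theta + \xi\,\psi(u^2)}.
\end{equation}
Since a nonnegative mixture of positive definite functions is positive definite, and since positive definiteness on $\S^d \times \R$ for \emph{all} $d$ is preserved under such mixtures, it suffices to show that for each fixed $\xi > 0$ the kernel $(\theta,u) \mapsto \bigl(\theta + \xi\,\psi(u^2)\bigr)^{-1}$ is positive definite on $\S^d \times \R$ for every positive integer $d$ (the case $\xi = 0$, if it carries mass, must be handled or excluded separately using $\varphi(0)=1$).

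Next I would identify the right tool to handle the elementary kernel $(a + b)^{-1}$. The key is the classical integral identity $(a+b)^{-1} = \int_0^\infty \mathrm{e}^{-r a}\,\mathrm{e}^{-r b}\,\mathrm{d}r$ for $a,b > 0$, applied with $a = \theta$ and $b = \xi\,\psi(u^2)$. This factorizes the kernel into a product of a purely spatial part $\mathrm{e}^{-r\theta}$ and a purely temporal part $\mathrm{e}^{-r\xi\psi(u^2)}$, integrated against the nonnegative measure $\mathrm{d}r$. A product of two positive definite kernels (one in the spatial argument, one in the temporal argument) is positive definite on the product space, so it suffices to show that for each fixed $r>0$ both factors are positive definite. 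This reduces the problem to two independent, cleaner claims.

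For the spatial factor, I must show that $\theta \mapsto \mathrm{e}^{-r\theta}$ is positive definite on $\S^d$ for every $d$; this is a known fact, as the exponential of the negative great circle distance is a Schoenberg-type positive definite function valid on the Hilbert sphere (it is completely monotone as a function composed appropriately, or one can cite the relevant characterization). For the temporal factor, I must show that $u \mapsto \mathrm{e}^{-r\xi\,\psi(u^2)}$ is positive definite on $\R$ for every $r,\xi > 0$. Here the hypothesis on $\psi$ enters: since $\psi$ is strictly positive with a completely monotone derivative, $\psi$ is a Bernstein function, and $u^2 \mapsto \mathrm{e}^{-c\,\psi(u^2)}$ is completely monotone in $u^2$ for every $c = r\xi > 0$ (composition of the completely monotone function $\exp(-c\,\cdot)$-type structure with the Bernstein function $\psi$). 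By Schoenberg's theorem, a completely monotone function of $u^2$ is positive definite on $\R$ (indeed on every $\R^n$), which closes this factor.

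The main obstacle will be the temporal factor, specifically verifying rigorously that $u \mapsto \mathrm{e}^{-c\,\psi(u^2)}$ is positive definite on $\R$ from the assumption that $\psi$ has a completely monotone derivative. The cleanest route is to recall that a nonnegative function whose derivative is completely monotone is a Bernstein function (after a possible additive constant), that the composition $g \circ \psi$ is completely monotone whenever $g$ is completely monotone and $\psi$ is Bernstein, and that $g(s) = \mathrm{e}^{-cs}$ is completely monotone; hence $s \mapsto \mathrm{e}^{-c\psi(s)}$ is completely monotone in $s = u^2$, and Schoenberg's criterion then yields positive definiteness on $\R$. I would state these composition facts as lemmas (citing \cite{SSV} or \cite{porcu-schilling}) and then assemble \eqref{plan-mixture}, the factorization, and the two factor-level claims to conclude that $C$ is positive definite on $\S^d \times \R$ for every $d$, with the dimension-independence following because neither factor's positive definiteness depends on $d$. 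A final remark: this construction never invokes Fourier inversion or a convergence argument, matching the paper's claim of a direct proof.
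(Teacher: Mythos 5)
Your proposal is correct and follows essentially the same route as the paper's proof: the Stieltjes representation of $\varphi$ turns $C$ into a mixture of kernels $\bigl(\theta + \xi\,\psi(u^2)\bigr)^{-1}$, which are then factorized via $\int_0^\infty {\rm e}^{-r\theta}{\rm e}^{-r\xi\psi(u^2)}\,{\rm d}r$ into a spatial exponential (positive definite on every $\S^d$ by Theorem 7 of \cite{gneiting2013}) times a temporal factor that is positive definite on $\R$ by the Bernstein--Schoenberg argument. Your treatment is in fact slightly more careful than the paper's, spelling out the complete-monotonicity of ${\rm e}^{-c\psi}$ and flagging the $\xi=0$ atom, which is ruled out anyway by the normalization $\int \xi^{-1}\mu({\rm d}\xi)=1$.
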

Again, the proof is deferred to Appendix \ref{C}. This result completes the adaptation of the Gneiting class \citep{gneiting2002} to spheres cross time. Theorem \ref{thm2} allows $C$ in \eqref{eq:gneiting-nuovo} to be positive definite on every $d$-dimensional sphere under the condition that the function $\varphi$ is a Stieltjes function. As already mentioned, the class is rich, and there is a whole catalogue available from the book by \cite{SSV}. In addition, our proof in Appendix \ref{C} does not require any Fourier inversion techniques, nor convergence arguments as those used in \cite{gneiting2002}. 

%Any function $\varphi$ from Table \ref{table1-functions} in Appendix \ref{B} can be used for the purposes of Theorem \ref{thm2}. 
We also note that the Mat{\'e}rn function ${\cal M}_{\alpha,\nu}$ cannot be used for the purposes of Theorem  \ref{thm2}. This is not surprising, as arguments in \cite{gneiting2013} show that the Mat{\'e}rn covariance function in \eqref{matern} can only be used in Theorem \ref{thm2} for $0 < \nu \le 1/2$. If one is interested in smoother realizations over spheres, then a common method involves using the Euclidean distance on spheres \citep{gneiting2013, PAF}, also called chordal distance, in \eqref{gneiting-euclidean}. In this case, any choice for $\nu>0$ preserves positive definiteness. At the same time, using chordal distance has a collection of drawbacks that have inspired constructive criticism in \cite{banerjee2005}, \cite{gneiting2013}, \cite{PAF} and \cite{alegria2-dimple}, to cite a few. We explore both possibilities and compare them in terms of predictive performance in Section \ref{sec:Data}.  

To introduce another class of covariance functions, we define the complementary error function ${\rm erfc}$ as $$ {\rm erfc}(t) = \frac{2}{\sqrt{\pi}} \int_{u}^{\infty} \exp \left ( - \xi^2 \right) {\rm d} \xi, \qquad t \ge 0, $$ and ${\rm erfc}(t)= 2- {\rm erfc}(-t)$ when $t$ is negative. We can show the following result.
\begin{theorem} \label{heine}
Let $\psi_{[0,\pi]}$ be the restriction to $[0,\pi]$ of a positive function with a completely monotonic derivative. Then, 
\begin{equation} \label{eq:heine} C(\theta,u)= {\rm e}^{-\mid u\mid }{\rm erfc } \left ( \sqrt{\psi_{[0,\pi]}(\theta)}- \frac{\mid u\mid }{2 \sqrt{\psi_{[0,\pi]}(\theta)}}\right ) + {\rm e}^{\mid u \mid }{\rm erfc } \left ( \sqrt{\psi_{[0,\pi]}(\theta)}+ \frac{\mid u\mid }{2 \sqrt{\psi_{[0,\pi]}(\theta)}}\right ),
\end{equation}
$\theta \in [0,\pi], u \in \R$, is a covariance function on $\S^d \times \R$ for all $d=1,2, \ldots$.
\end{theorem}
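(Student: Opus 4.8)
The plan is to reduce the claim to the equivalence of statements $1$ and $2$ in Theorem \ref{partial-gneiting}. Writing $C_\tau(\theta)=\int_{\R}{\rm e}^{-\mathsf{i}u\tau}C(\theta,u)\,{\rm d}u$ for the temporal Fourier transform of \eqref{eq:heine}, it suffices to show that $C_\tau$ is a covariance function on $\S^d$ for every $d$ and almost every $\tau\in\R$. Before invoking Theorem \ref{partial-gneiting} one must check its standing hypotheses for $C$ in \eqref{eq:heine}: continuity is clear, while boundedness and temporal integrability follow once we note that $C(\theta,u)\ge 0$ (each ${\rm erfc}$ term is nonnegative) and that $\int_{\R}C(\theta,u)\,{\rm d}u=C_0(\theta)<\infty$ from the closed form of $C_\tau$ computed below. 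I would flag the remaining Gegenbauer summability $\sum_k\int_{\R}|b_{k,d}(u)|\,{\rm d}u<\infty$ as a routine but necessary technical verification, which drops out of the explicit form of $C_\tau$.

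The crux is to compute $C_\tau$ in closed form. Rather than integrating the complementary error functions directly, I would exploit the fact that \eqref{eq:heine} is engineered to satisfy a simple differential equation in $u$. Abbreviating $\psi=\psi_{[0,\pi]}(\theta)$ and differentiating \eqref{eq:heine} twice in $u$, the derivatives of the ${\rm erfc}$ terms produce Gaussian contributions proportional to ${\rm e}^{-\psi-u^2/(4\psi)}$ that cancel in pairs, leaving (for $u\ne 0$, and by evenness and continuity for all $u$)
\begin{equation*}
\left(1-\partial_u^2\right)C(\theta,u)=\frac{2}{\sqrt{\pi\psi}}\,{\rm e}^{-\psi}\,{\rm e}^{-u^2/(4\psi)},\qquad u\in\R.
\end{equation*}
Taking Fourier transforms in $u$ and using $\int_{\R}{\rm e}^{-\mathsf{i}u\tau}{\rm e}^{-u^2/(4\psi)}\,{\rm d}u=2\sqrt{\pi\psi}\,{\rm e}^{-\psi\tau^2}$ turns this into an algebraic identity, yielding the clean expression
\begin{equation*}
C_\tau(\theta)=\frac{4}{1+\tau^2}\,{\rm e}^{-(1+\tau^2)\,\psi_{[0,\pi]}(\theta)}.
\end{equation*}
This is where essentially all of the content lies, and I expect it to be the main obstacle: the particular combination of signs and arguments in \eqref{eq:heine} is exactly what forces the right-hand side to be a pure Gaussian, so that the Fourier transform collapses to an elementary function.

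It then remains to show that, for each fixed $c=1+\tau^2>0$, the function $\theta\mapsto {\rm e}^{-c\,\psi_{[0,\pi]}(\theta)}$ is positive definite on $\S^d$ for all $d$; the positive prefactor $4/(1+\tau^2)$ is irrelevant. Since $\psi$ has a completely monotone derivative it is a Bernstein function, so ${\rm e}^{-c\psi}$ is completely monotone and, by Bernstein's theorem, admits a Laplace representation ${\rm e}^{-c\psi(\theta)}=\int_0^{\infty}{\rm e}^{-s\theta}\,\rho_c({\rm d}s)$ with $\rho_c$ a positive measure. Because a nonnegative mixture of positive definite functions is positive definite, the claim reduces to the single building block, shared with the proof of Theorem \ref{thm2}, that $\theta\mapsto {\rm e}^{-s\theta}$ is positive definite on every $\S^d$ for each $s>0$.

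To establish that building block I would invoke Schoenberg's characterization \citep{schoenberg}: ${\rm e}^{-s\theta}$ is positive definite on all spheres if and only if $x\mapsto {\rm e}^{-s\arccos x}$ has a power-series expansion on $[-1,1]$ with nonnegative coefficients of finite sum. Writing $\arccos x=\tfrac{\pi}{2}-\arcsin x$ gives ${\rm e}^{-s\arccos x}={\rm e}^{-s\pi/2}\,{\rm e}^{s\arcsin x}$; since $\arcsin x=\sum_{k\ge0}\binom{2k}{k}4^{-k}(2k+1)^{-1}x^{2k+1}$ has nonnegative coefficients and the exponential of a power series with nonnegative coefficients again has nonnegative coefficients, the criterion holds, with coefficients summing to ${\rm e}^{-s\arccos 1}=1$. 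Combining the three ingredients — reduction via Theorem \ref{partial-gneiting}, the ODE computation of $C_\tau$, and the Laplace/Schoenberg argument for ${\rm e}^{-c\psi(\theta)}$ — shows that $C_\tau$ is positive definite on $\S^d$ for every $d$ and every $\tau$, and hence that \eqref{eq:heine} is a covariance function on $\S^d\times\R$ for all $d$.
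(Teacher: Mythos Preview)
Your proof is correct, but the route differs from the paper's in an interesting way. The paper simply cites an integral identity from the Bateman tables,
\[
C(\theta,u)=\frac{2}{\pi}\int_{0}^{\infty}\cos(|u|\omega)\,\frac{{\rm e}^{-(1+\omega^{2})\psi_{[0,\pi]}(\theta)}}{1+\omega^{2}}\,{\rm d}\omega,
\]
and reads this off as a positive scale mixture of separable kernels: $\cos(|u|\omega)$ is positive definite on $\R$, and $\theta\mapsto{\rm e}^{-(1+\omega^{2})\psi(\theta)}$ is completely monotone, hence positive definite on every $\S^{d}$ by \cite{gneiting2013}, Theorem~7. No Fourier inversion, no Theorem~\ref{partial-gneiting}, no summability check.

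Your ODE computation is a nice self-contained rederivation of exactly this identity: the relation $(1-\partial_{u}^{2})C=\tfrac{2}{\sqrt{\pi\psi}}{\rm e}^{-\psi-u^{2}/(4\psi)}$ and its Fourier transform give $C_{\tau}(\theta)=\tfrac{4}{1+\tau^{2}}{\rm e}^{-(1+\tau^{2})\psi(\theta)}$, which is precisely the integrand above with $\tau=\omega$. The advantage of your approach is that you do not need to know the Bateman formula in advance. The disadvantage is that routing the conclusion through Theorem~\ref{partial-gneiting} forces you to confront its Gegenbauer summability hypothesis, which you correctly flag but do not verify; it is in fact checkable, but unnecessary. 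Once you have $C_{\tau}$ explicitly and integrable in $\tau$, Fourier inversion already gives the mixture representation above, and the direct mixture argument (as in the proof of Theorem~\ref{thm2}) finishes the job without Theorem~\ref{partial-gneiting} at all. Your power-series verification that $\theta\mapsto{\rm e}^{-s\theta}$ lies in $\Psi_{\infty}$ is correct, though the paper simply cites \cite{gneiting2013} for this fact.
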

The class presented in \eqref{eq:heine} is related to a covariance class on $\R \times \R$ considered by \cite{heine}. Again, the proof is provided in Appendix \ref{C}.

\section{Modeling Nonseparable Spatiotemporal Data}\label{sec:modeling}

\subsection{Hierarchical Process Modeling for Spatiotemporal Data}

%For a Bayesian model, the goal of modeling is to update our prior belief about a process using data. Firstly, we select an appropriate model or likelihood $p(x|\theta)$ for the data $x$ with parameters $\theta$. Then, we elicit a prior distribution $\pi(\cdot)$ that captures our beliefs before observing data.  The prior distribution is updated by the likelihood using Bayes rule, giving the posterior distribution 
%\begin{equation}
%\pi(\theta |x)  = \frac{p(x|\theta) \pi(\theta)}{\int_\theta p(x|\theta) d \pi(\theta) }.
%\end{equation}
%In most cases, $\int_\theta p(y|\theta) d \pi(\theta)$ is intractable, and Markov chain Monte Carlo methods are used to sample from the posterior distribution. Others have proposed methods that approximate $p(x|\theta) \pi(\theta)$ to give a tractable integral, e.g. variational Bayes or integrated nested Laplace approximations, and thus tractable inference \citep[see, e.g.,][]{beal2003,rue2009}. For more details about Bayesian inference, see \citet{gelman2014}. 

We illustrate the utility of one of our covariance classes (Theorem \ref{thm2}) using hierarchical NNGP models in a Bayesian setting. In spatial and spatiotemporal analyses, Bayesian models are often preferred for hierarchical modeling because they allow for simple and rigorous uncertainty quantification through a single probabilistic framework that does not rely on asymptotic assumptions \citep[see, e.g.,][]{gelman2014,banerjee2014,cressie2015}.

Spatiotemporal random effects for point-referenced data are often specified through a functional prior using a Gaussian process (GP). Gaussian processes are natural choices for modeling data that vary in space and time. However, likelihood computations for hierarchical GP models require inverting a square matrix with dimension equal to the size of the data, making GP models intractable in ``big-data'' settings. Many have addressed this computational bottleneck using either low-rank or sparse matrix methods \citep[see][for a review and comparison of some of these methods]{heaton2017}. 

Low-rank methods depend on selecting representative points, often called knots, that are used to approximate the original process \citep[see, e.g.,][]{higdon2002,banerjee2008,cressie2008,stein2008}. These models tend to oversmooth and often have poor predictive performance \citep[see][]{stein2014, heaton2017}.

In contrast to low-rank methods, inducing sparsity in either the covariance matrix or the precision matrix can reduce the computational burden. Covariance tapering creates sparsity in the covariance matrix by using compactly supported covariance functions \citep[see, e.g.,][]{furrer2006,kaufman2008}. These methods are generally effective for parameter estimation and interpolation; however, the allowable class of covariance functions is limited. On the other hand, inducing sparsity in the precision matrix has been leveraged to approximate GPs using Markov random fields \citep{Lindgren} or using conditional likelihoods \citep{vecchia1988,stein2004}. These approaches were extended to process modeling by \citet{gramacy2015} and \citet{datta2016a}. For discussion and further extension of these approaches, see \citet{katzfuss2017}. Unlike local approximate Gaussian processes \citep{gramacy2015}, the NNGP is itself a GP \citep{datta2016a} and has good predictive performance relative to other ``fast'' GP methods \citep[See][]{heaton2017}.

To specify an NNGP, we begin with a parent GP over $\mathbb{R}^d \times \mathbb{R}$ or $\S^d \times \mathbb{R}$. Nearest neighbor Gaussian process models induce sparsity in the precision matrix of the parent Gaussian model by assuming conditional independence given neighborhood sets constructed from directed acyclic graphs, yielding huge computational benefits \citep{datta2016a,datta2016c}. Modeling, model fitting, and prediction details for NNGP models are given in Appendix \ref{app:NNGP}.

\subsection{Examples of Covariance Functions}\label{sec:cov_examples}
\def\red{\textcolor{red}}
Here, we turn our attention to six nonseparable covariance functions used in simulation studies in Section \ref{app:sims} in our data analyses (Sections \ref{sec:air} and \ref{sec:cloud}). For all examples, we parameterize the models with variance $\sigma^2$ and use $c_s$ and $c_t$ to represent the strictly positive spatial and temporal scale parameters, respectively.

Explicitly, we consider two special cases from the Gneiting class \eqref{gneiting-euclidean} with $\varphi(t)={\cal M}_{c_s,\nu}(t)$, $t \ge 0$, for ${\cal M}_{c_s,\nu}$, defined in \eqref{matern}, obtained when $\nu=1/2$ and $\nu=3/2$,
\begin{equation}\label{eq:gneiting-chordal}
C(\bh ,u) = \frac{\sigma^2}{  \left(1 + \left(\frac{\mid u\mid }{c_t} \right)^\alpha \right)^{\delta + \beta d/2}} {\cal M}_{c_s,\nu} \left ( \frac{\|\bh\|}{\left(1 + \left(\frac{\mid u\mid}{c_t} \right)^\alpha\right)^{\beta/2} }\right ), \quad (\bh,u) \in \R^d \times \R. \end{equation}
These choices correspond to ${\cal M}_{c_s,1/2}(t)= \exp(-t/c_s)$ and ${\cal M}_{c_s,3/2}(t)= \exp(-t/c_s) (1+t/c_s)$, $t \ge0$. Here, we work on the sphere, thus $\| \cdot \|$ refers to chordal distance. The parameters restriction is $\delta >0$, $\beta \in (0,1]$ and $\alpha \in (0,2]$. 
%Observe that such a parameterization is consistent with the examples proposed by \cite{gneiting2002}.
%
%
%
%
%
%
%\begin{equation}\label{eq:gneiting1}
%C(\lVert \bh \rVert,u) = \frac{\sigma^2}{ \left(1 + \left(\frac{|u|}{c_t} \right)^\alpha \right)^{\delta + \beta d/2}} \exp\left( - \frac{ \left( \frac{\lVert \bh\rVert}{c_s}  \right)}{ \left(1 + \left(\frac{|u|}{c_t} \right)^\alpha\right)^{\beta/2} } \right)
%\end{equation}
%and
%\begin{equation}\label{eq:gneiting2}
%C(\lVert \bh \rVert,u) = \frac{\sigma^2}{ \left(1 + \left(\frac{|u|}{c_t} \right)^\alpha \right)^{\delta + \beta d/2}} \left( 1 +  \frac{ \left( \frac{\lVert \bh\rVert}{c_s}  \right)}{ \left(1 + \left(\frac{|u|}{c_t} \right)^\alpha\right)^{\beta/2}} \right) \exp\left( - \frac{ \left(\frac{\lVert \bh\rVert}{c_s}  \right)}{ \left(1 + \left(\frac{|u|}{c_t}\right)^\alpha \right)^{\beta/2} } \right),
%\end{equation}
%where $\sigma^2 >0$, $c_s >0$, $c_t>0$, $\delta >0$, $\beta \in (0,1]$, $\alpha \in (0,2]$, and $d$ is the number of dimensions in which the Euclidean distance is computed. Because Equation (\ref{eq:gneiting2}) corresponds to a Mat\'ern covariance function with $\nu = 3/2$, there is no positive-definite analogue for spherical distance.

Next, we consider a pair of similar covariance models from the inverted Gneiting class  
\citep{Porcu-Bevilacqua-Genton} in \eqref{eq:gneiting}):
\begin{equation}\label{eq:invgneiting1}
C(\theta,u) = \frac{\sigma^2}{\left( 1 + \left(\frac{\theta}{c_s}\right)^\alpha \right)^{\delta + \beta/2} } \exp\left( - \frac{ | u| ^{2\gamma} }{c_t^{2 \gamma} \left( 1 + \left(\frac{\theta}{c_s}\right)^\alpha \right)^{\beta \gamma} } \right), \qquad (\theta,u) \in [0,\pi] \times \R,
\end{equation}
where $\delta>0$, and where  $\beta, \alpha$ and $\gamma$ belong to the interval $(0,1]$. The second we consider uses the generalized Cauchy covariance function \citep{GS2004} for the temporal margin,
that is $\psi(u) = (1 + (|u|/c_s)^{\gamma} )^{-\lambda}$:
\begin{equation}\label{eq:invgneiting2}
C(\theta,u) = \frac{\sigma^2}{\left( 1 + \left(\frac{\theta}{c_s}\right)^\alpha \right)^{\delta + \beta/2} } \left( 1+ \frac{ |u| ^{2\gamma} }{c_t^{2\gamma} \left( 1 + \left(\frac{\theta}{c_s}\right)^\alpha \right)^{\beta \gamma} } \right)^{-\lambda},  \qquad (\theta,u) \in [0,\pi] \times \R,
\end{equation}
with $\delta,\lambda >0$, and where  $\beta, \alpha$ and $\gamma$ belong to the interval $(0,1]$. 

%\textcolor{red}{Emilio, would you be willing to double check the limits of the $\alpha$ parameters below?}

As a first example from our new adapted Gneiting class on spheres cross time (see Theorem \ref{thm2}), we chose the Stieltjes function $$\varphi(t)= \kappa \frac{1-{\rm e}^{-2 \sqrt{t+1}}}{\sqrt{t+1}}, \qquad t \ge 0, $$ with $\kappa:= 1/(1-{\rm e}^{-2})$ a normalization constant. 
We then pick the function $\psi(t)= (1+t^{\alpha})^{\delta}$, that is a Bernstein function for $\alpha,\delta \in (0,1]$. Thus, we have 
\begin{equation}\label{eq:our_mod1}
C(\theta,u) = \frac{\sigma^2 \kappa}{\left( 1 + \left ( \frac{\mid u\mid }{c_t} \right)^\alpha \right)^{\delta}}  
\frac{1-{\rm e}^{-2 \left (1 + \frac{\theta}{c_s \left  ( 1+ \left ( \frac{|u|}{c_t} \right )^{\alpha} \right )^{\delta}} \right )^{1/2}}}{\left (1 + \frac{\theta}{c_s \left ( 1+ \left ( \frac{|u|}{c_t} \right )^{\alpha} \right )^{\delta}} \right )^{1/2}}.
\end{equation} 
For the second, we again propose a generalized Cauchy covariance function for the spatial margin, obtaining
\begin{equation}\label{eq:our_mod2}
C(\theta,u) = \frac{\sigma^2}{\left( 1 + \left(\frac{\mid u\mid }{c_t}\right)^\alpha \right)^{\delta + \beta /2} } \left( 1+ \frac{ \theta^{\gamma} }{c_s^{\gamma}  \left( 1 + \left(\frac{\mid u\mid }{c_t}\right)^\alpha \right)^{\beta \gamma} } \right)^{-\lambda},  \qquad (\theta,u) \in [0,\pi] \times \R,
\end{equation}
where $\delta >0$, $\beta, \gamma \in (0,1]$, $\alpha \in (0,2]$ and $\lambda >0$.

For all models in our simulation study and our data analyses, we include an independent Gaussian error term with variance $\tau^2$ in the model. The variance $\tau^2$ is often called a nugget and accounts for potential discontinuities at the origin of the covariance function. In other words, the nugget accounts for sources of uncertainty that are not explained or captured by our spatiotemporal model. 

\subsection{Model Comparison}\label{sec:mod_comp}

To compare models that differ in terms of covariance specification, we propose the following criteria for comparing predictions to hold-out data $y_i$: 90\% predictive interval coverage, predictive mean square or absolute error (defined as $( \bE(Y_i\mid \bY_{obs}) - y_i)^2$ or $\mid  \bE(Y_i\mid \bY_{obs}) - y_i\mid $, respectively), where $\bY_{obs}$ denotes observations.
Besides these common criteria, we also use a strictly proper scoring rule \citep{gneiting2007}, the continuous ranked probability score (CRPS), defined as
\begin{equation}
\text{CRPS}(F_i,y_i) = \int^\infty_{-\infty} (F_i(x) -  \bone(x \geq y_i) )^2 {\rm d} x =  \bE\mid Y_i - y_i \mid  - \frac{1}{2}\bE \mid  Y_i - Y'_i \mid,
\end{equation}
where $Y_i$ and $Y_i'$ follow the predictive distribution $F_{i}$ \citep[see][for early discussion on CRPS]{brown1974,matheson1976}.. An empirical estimate of the continuous ranked probability score, using $M$ posterior predictive samples $Y_{i,1},...,Y_{i,M}$ from $Y_{i} \mid  \bY_{obs}$, is
\begin{equation}
\text{CRPS}(\hat{F}_i,y_i) = \frac{1}{M} \sum_{j=1}^M \mid Y_{i,j} - y_i\mid  - \frac{1}{2M^2} \sum_{j=1}^M \sum_{k=1}^M \mid  Y_{i,j} - Y_{i,k} \mid.
\end{equation}
We average continuous ranked probability scores over all hold-out data to obtain a single value for comparison.

\section{Simulation and Data Examples}\label{sec:Data}

Using only covariance functions \eqref{eq:our_mod1} and \eqref{eq:our_mod2}, we provide a brief simulation study in Section \ref{app:sims} to explore the identifiability of covariance model parameters using an NNGP model. In Sections \ref{sec:air} and \ref{sec:cloud}, we illustrate practical predictive advantages of the new Gneiting class using spherical distance (Theorem \ref{thm2}) using two climate reanalysis datasets from the National Centers for Environmental Prediction and National Center for Atmospheric Research \citep{kalnay1996}. For both analyses, we use the first week of the 2017 dataset.

\subsection{Simulation Study}\label{app:sims}

%\textcolor{red}{Emilio, would you be willing to read this subsection? I thought we could include this in the main body because a reviewer said there was a lot in the appendix. If you disagree, I'll put it back. I'm curious what you think about this. Are we showing too much weakness here?}

Here, we present simulation studies to explore the empirical identifiability of covariance model parameters.
To do this, we simulate many datasets that differ in their covariance specification, using either (\ref{eq:our_mod1}) and (\ref{eq:our_mod2}). For each covariance function, we fix parameters and generate 1,000 datasets from the following generative model: 
\begin{align}
Y(\bs,t) &= w(\bs,t) + \epsilon(\bs,t), \label{eq:gen_mod} \\
w(\bs,t) &\sim \text{GP}(0,C((\bs,t),(\bs',t'))),  \nonumber \\
\epsilon(\bs,t) &\sim \text{GP}(0,\tau^2 \delta^\bs_\bs \delta^t_t),
 \nonumber
\end{align}  
where observations lie on an evenly spaced grid of latitudes ranging from -90 to -60 (5$^\circ$ spacing) and longitudes between -180 and 0 (5$^\circ$ spacing). This grid is repeated from times 1 to 10, giving $N=4330$. While we use a full GP specification for simulation and the dataset is not particularly large when fitting a single model, we fit these data using a hierarchical NNGP model with $m=25$ neighbors because this mirrors modeling approach in Section \ref{sec:modeling} and because we fit 1,000 simulated datasets per simulation (6,000 in total). Neighbors are selected using simple rectangular neighborhood sets using great-circle (spherical) distance to define nearness \citep[see][]{datta2016c}. 

When simulating from \eqref{eq:gen_mod} using (\ref{eq:our_mod1}) as the covariance function, we use with $\sigma^2 \kappa = 4$, $c_s = 0.2$, $c_t = 2$, $\alpha = 1/2$, $\delta = 1/2$, and $\tau^2 = 1$ for the noise term $\epsilon(\bs,t)$. For $\tau^2$ and $\sigma^2$, we use inverse-gamma priors with 0.1 as both the shape and scale (corresponding to the rate of a gamma distribution) parameters. We assume $c_s \sim \text{Unif}(0,\pi)$, $c_t \sim \text{Unif}(0,10)$, $\alpha \sim \text{Unif}(0,1]$, and $\delta \sim \text{Unif}(0,1]$, \emph{a priori}.

When using (\ref{eq:our_mod2}) in the generative model \eqref{eq:gen_mod}, we set $\sigma^2 = 4$, $c_s = 0.2$, $c_t = 2$, $\alpha =1$, $\beta = 1/2$, $\delta = 3/4$, $\lambda = 1$, $\gamma = 1/2$, and $\tau^2 = 1$ for the noise term $\epsilon(\bs,t)$. For simplicity, we constrain $\delta + \beta/2 = 1$ , $\gamma = 1/2$, and $\lambda = 1$. As before, we use inverse-gamma priors with 0.1 as both the shape and scale parameters for $\tau^2$ and $\sigma^2$. As before, we assume $c_s \sim \text{Unif}(0,\pi)$, $c_t \sim \text{Unif}(0,10)$, $\alpha \sim \text{Unif}(0,2]$, $\beta \sim \text{Unif}(0,1]$, and $\delta \sim \text{Unif}(0,1]$, \emph{a priori}. 

We explore the effect of fixing some model parameters to examine how model identifiability is affected. Specifically, we consider fixing combinations of $c_s$, $c_t$, and $\alpha$ in (\ref{eq:our_mod1}) and (\ref{eq:our_mod2}) to the true value. In this setting, we re-fit the models described above, keeping all other specifications the same as described. We present the 90\% empirical coverage rates for all parameters in Table \ref{tab:coverage}. In this table, dashes indicate that parameters are fixed. For (\ref{eq:our_mod1}), we see improved coverage rates (i.e., closer to 90\%) for $\sigma^2$ and $\alpha$ when range parameters $c_s$ and $c_t$ are fixed; however, $\delta$ shows significant under coverage when range parameters are fixed. The results for (\ref{eq:our_mod2}) are similar. When range parameters are fixed, coverage rates for $\sigma^2$ and $\beta$ are closer to 90\%. For this covariance model, $\alpha$ has slightly worse coverage when range parameters are fixed.

\begin{table}[H]
\centering
\begin{tabular}{rrrrrrr}
  \hline
 Model & $\sigma^2$ & $\tau^2$ & $c_s$ & $c_t$ & $\alpha$ & $\delta$ or $\beta$ \\ 
  \hline
(\ref{eq:our_mod1}) & 0.72 & 0.87 & 0.71 & 0.84 & 0.74 & 0.81 \\ 
(\ref{eq:our_mod1}) & 0.79 & 0.87 & ---- & ---- & 0.82 & 0.62 \\ 
(\ref{eq:our_mod1}) & 0.80 & 0.89 & ---- & ---- & ---- & 0.55 \\ 
(\ref{eq:our_mod2}) & 0.73 & 0.88 & 0.67 & 0.89 & 0.75 & 0.99 \\ 
(\ref{eq:our_mod2}) & 0.82 & 0.89 & ---- & ---- & 0.69 & 0.92 \\ 
(\ref{eq:our_mod2}) & 0.82 & 0.90 & ---- & ----  & ---- & 0.92 \\ 
   \hline
\end{tabular}
\caption{90\% empirical coverage rates. Here, we use dashes when parameters are fixed on the parameters used in simulating the data.}\label{tab:coverage}
\end{table}

These simulation studies highlight some of the limitations in estimating parameters of covariance functions from Theorem \ref{thm2}.
For both (\ref{eq:our_mod1}) and (\ref{eq:our_mod2}), we note that there is limited parameter identifiability, particularly for spatial range parameter $c_s$ and spatiotemporal variance $\sigma^2$. Although we do not provide identifiability proofs, this result seems similar to the limited identifiability of the Mat\'ern class that is identifiable up to $\sigma^2 {c_s}^{-2\nu}$ \citep[see][]{zhang2004}. Apparently, this issue becomes even more complex under the space-time setting, and the lack of theoretical results for space-time asymptotics and equivalence of Gaussian measures make this problem very difficult. In this simulation, we find improved parameter identifiability when we fix spatial range parameters. We emphasize that the primary goal of our analyses is comparing predictive performance. However, if the unbiased estimation of covariance parameters is the primary goal, then a multi-stage fitting process can improve estimation \citep{mardia1984}.

\subsection{Surface Air Temperature Reanalysis Data}\label{sec:air}

For this section, we utilize the 2017 National Centers for Environmental Prediction/National Center for Atmospheric Research daily average 0.995 sigma level (near-surface) temperature reanalysis data \citep{kalnay1996}. Air temperature at 0.995 sigma level is defined to be the temperature taken at an air pressure 0.995 $\times$ surface air pressure. 

The foundations of global temperature change are well established \citep[see, e.g.,][]{folland2001,hansen2006,hansen2010}. Furthermore, air temperature changes have, along with other changes in climate, a wide and deep impact on global biological systems \citep[see][]{parmesan2003,thomas2004,held2006}. For these reasons, many climate models are dedicated to understanding past and predicting future temperature changes \citep[see][for some discussion and comparisons about the various analyses of surface air temperature]{simmons2004}. 

The daily near-surface temperature reanalysis data represent daily temperature averages over a global grid with $2.5^{\circ}$ spacing for latitude and longitude. We thin the data to $5^{\circ}$ spacing for latitude and longitude to carry out a model comparison on the hold-out data. In this dataset, we have observations at 2522 unique spatial locations, giving 17654 total observations. The averages of near-surface temperature over the first week of January are plotted in Figure \ref{fig:temp_map}. Additionally, we give the density estimate of near-surface temperature for each day in Figure \ref{fig:temp_hist}. Figure \ref{fig:temp_hist} shows that the overall temperature distribution is similar across days, and Figure \ref{fig:temp_map} demonstrates a clear spatial structure. Because our covariance model allows space to be scaled by time while using the spherical distance, we expect that our model may be able to capture the strong spatial structure in this data more effectively than the models with which we compare them.
\begin{figure}[H]
\begin{center}
   \begin{subfigure}[b]{.59\textwidth}
\includegraphics[width=\textwidth]{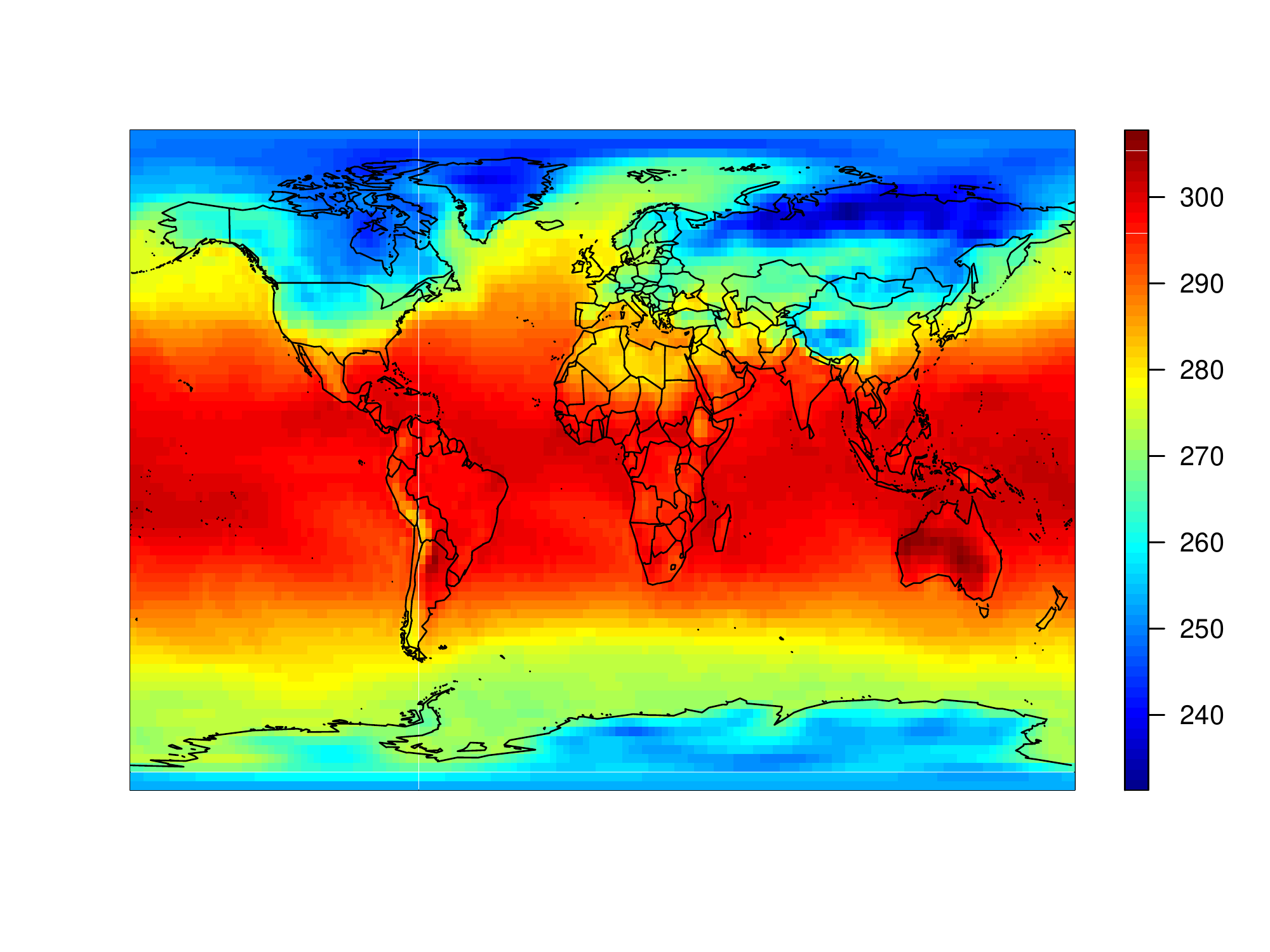}
\vspace{-10mm}
      \subcaption{ Near-surface temperature heat map in degrees Kelvin (K).}\label{fig:temp_map}
   \end{subfigure}
      \begin{subfigure}[b]{.4\textwidth}
\includegraphics[width=\textwidth]{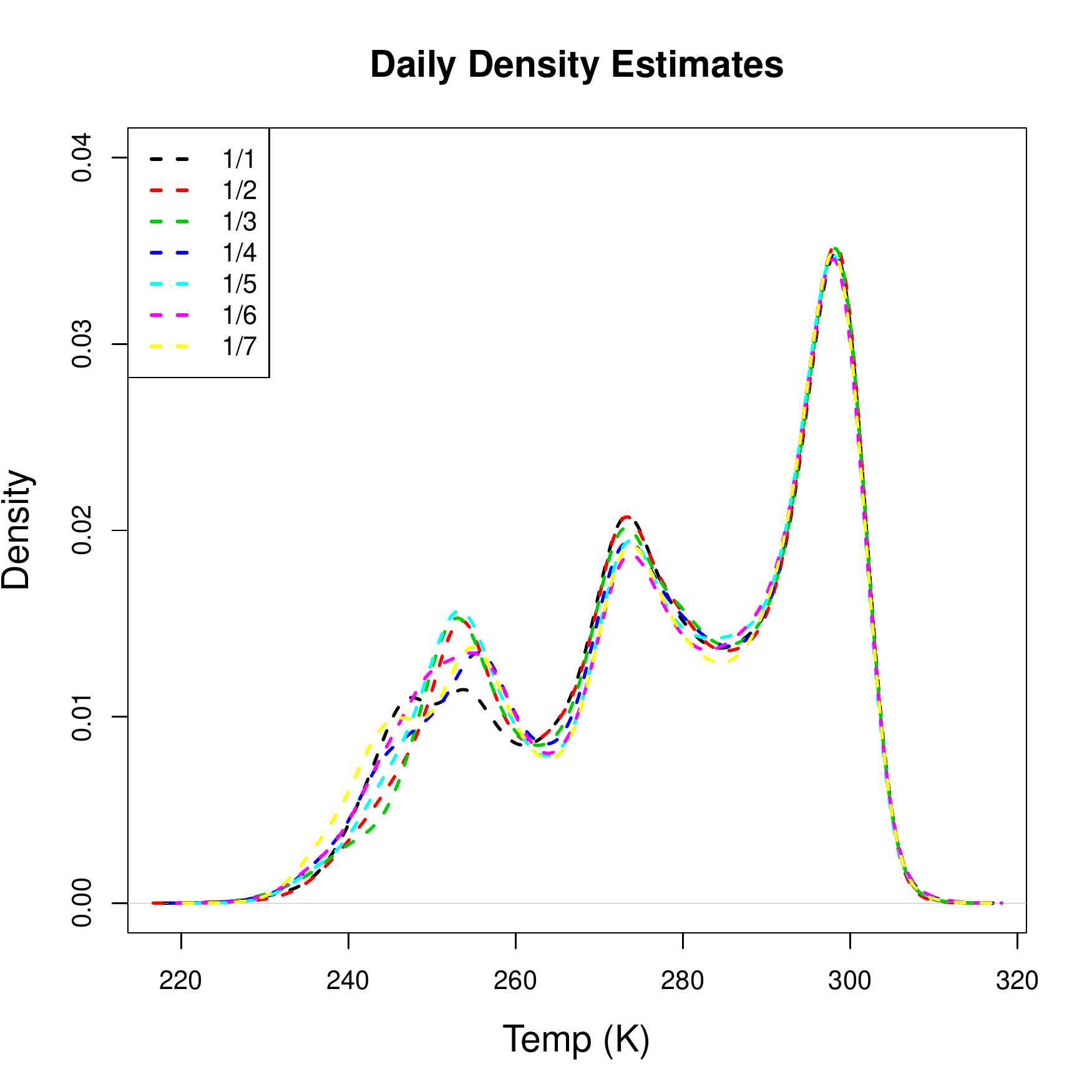}
      \subcaption{ Kernel density estimates of global near-surface temperature for each day.}\label{fig:temp_hist}
   \end{subfigure}
\end{center}
\end{figure}
With 17654 data, carrying out fully Bayesian inference using a full GP model is computationally burdensome; thus, we utilize a NNGP model. For these models, we use simple neighborhood selection presented in \citet{datta2016c} using $m = 25$ neighbors, using the five nearest neighbors at the five most recent times, including the current time. We utilize two covariance models from each of the Gneiting class, the inverted-Gneiting class using spherical distance, and our new Gneiting class (Theorem \ref{thm2}). These models are presented in \eqref{eq:gneiting-chordal}
%, (\ref{eq:gneiting2}), (\ref{eq:invgneiting1}), (\ref{eq:invgneiting2}), (\ref{eq:our_mod1}), 
to \eqref{eq:our_mod2}. For all models, we use inverse-gamma prior distributions for $\tau^2$ and $\sigma^2$ with 0.1 for both the shape and scale parameters (corresponding to the rate parameter of a gamma distribution). We use prior distributions of $c_s \sim \text{Unif}(0,\pi)$, $c_t \sim \text{Unif}(0,10)$, $\alpha \sim \text{Unif}(0,2]$, and $\beta \sim \text{Unif}(0,1]$ for correlation function parameters. Because many covariance models have limited parameter identifiability, we constrain $\delta + \beta d/2 = 1$ for \eqref{eq:gneiting-chordal}, and $\delta + \beta/2 = 1$ for \eqref{eq:invgneiting1}, \eqref{eq:invgneiting2}, and \eqref{eq:our_mod2}.
%For simplicity, we constrain $\delta + \beta/2 = 1$ for all models, $\gamma = 1/2$ for all models, and $\lambda = 1$ for Equations (\ref{eq:invgneiting2}) and (\ref{eq:our_mod2}).

We compare these six models in terms of predictive performance on a randomly selected subset of the hold-out data. In total, we use 1000 locations over the week, giving 7000 hold-out observations. These hold-out locations are plotted in Figure \ref{fig:air_holdout} in Appendix \ref{app:holdout}. We compare these models in terms of predictive root mean squared error, mean absolute error, continuous ranked probability score, and 90\% prediction interval coverage, as discussed in Section \ref{sec:mod_comp}. For predictions, neighbors are chosen to make prospective predictions using 25 neighbors (See Appendix \ref{app:NNGP} for details on modeling and prediction). 

The results presented are based on 25,000 posterior draws after a burn-in of 5,000 iterations using a Gibbs sampler presented in Appendix \ref{app:NNGP}. These posterior samples are used for prediction and posterior inference. The results of the model comparison are given in Table \ref{tab:mod_comp1}. For this data, the covariance models from our class \eqref{eq:our_mod1} and \eqref{eq:our_mod2} had the best out-of-sample predictive performance, and the model \eqref{eq:our_mod1} was the very best. For comparison, the Gneiting class using chordal distance had continuous ranked probability scores 7\% and 16\% higher than the best model for $\nu = 1/2$ and $\nu = 3/2$, respectively. Both models from the inverted Gneiting class with spherical distance were 13\% worse in terms of continuous ranked probability scores.
\begin{table}[H]
\centering
\begin{tabular}{lrrrrr}
  \hline
 Equation & PRMSE & PMAE & 90\% Coverage & CRPS & Relative CRPS \\ 
  \hline
 \eqref{eq:gneiting-chordal} and $\nu=1/2$ & 6.44 & 4.58 & 0.91  & 3.50 & 1.07 \\
  \eqref{eq:gneiting-chordal} and $\nu=3/2$  & 7.15 & 5.08 & 0.91  & 3.81 & 1.16 \\
 \eqref{eq:invgneiting1} & 6.79 & 4.85 & \textbf{0.90}  & 3.70 & 1.13 \\
 \eqref{eq:invgneiting2}  & 6.78 & 4.84 & \textbf{0.90}  & 3.69 & 1.13 \\
 \eqref{eq:our_mod1} & \textbf{6.02} & \textbf{4.26} & \textbf{0.90}  & \textbf{3.28} & \textbf{1.00} \\ 
  \eqref{eq:our_mod2} & \textbf{6.04} & \textbf{4.40} & \textbf{0.90}  & \textbf{3.35} & \textbf{1.02}\\ 
   \hline
\end{tabular}
\caption{Predictive performance of competing covariance models for the cloud cover dataset. For brevity in the table, let PRMSE and PMAE denote predictive mean squared error and mean absolute error, respectively.  Relative CRPS is scaled such that the lowest is one. Bolded entries are used to indicate best model performances, i.e. lowest PRMSE, PMAE, and CRPS and 90\% interval coverage closest to 90\%. }\label{tab:mod_comp1}
\end{table}
For the best model, \eqref{eq:our_mod1}, we provide posterior summaries in Table \ref{tab:post_sum1}. Additionally, we display the correlation contour as function of spherical distance $\theta$ and time $t$ for the posterior mean in Figure \ref{fig:air_contour}. Correlation is very persistent as a function of time; thus, decreases in autocorrelation are almost completely attributable to changes in spatial location. 

%\textbf{I have seen some papers (see the review by Jeong-Jun-Genton) where they comment on the percentage improvement of the best model with respect to the ones based on chordal. It looks like the improvement is incredibly high in percentage, and this should be stressed. Further, I haven't seen any comment in Section 4.1 regarding the fact that $\|\cdot\|$ is actually denoting the chordal distance when we work on spheres. Should we say that? }

\begin{table}[H]
\centering
\begin{tabular}{rrrrr}
  \hline
  & Mean & Standard Deviation & 2.5\%  & 97.5\%  \\ 
  \hline
    $\tau^2$  & 21.140 & 0.553 & 20.088 & 22.250 \\ 
 $\sigma^2$  & 105.897 & 4.253 & 97.953  & 114.498 \\ 
 $c_s$ &  0.994 & 0.024 & 0.952 & 1.025 \\
 $c_t$  & 2.783 & 3.243 & 0.026 & 9.624 \\  
$\alpha$  & 0.017 & 0.020 & 0.001 & 0.067 \\ 
$\delta$  & 0.090 & 1.54e-3 & 0.088 & 0.092 \\ 
   \hline
\end{tabular}
\caption{Posterior summaries for the near-surface air temperature dataset for parameters for the model fit using (\ref{eq:our_mod1}). Percentiles (2.5\% and 97.5\%) represent posterior percentiles.}\label{tab:post_sum1}
\end{table}
\begin{figure}[H]
\begin{center}
\includegraphics[width=.48\textwidth]{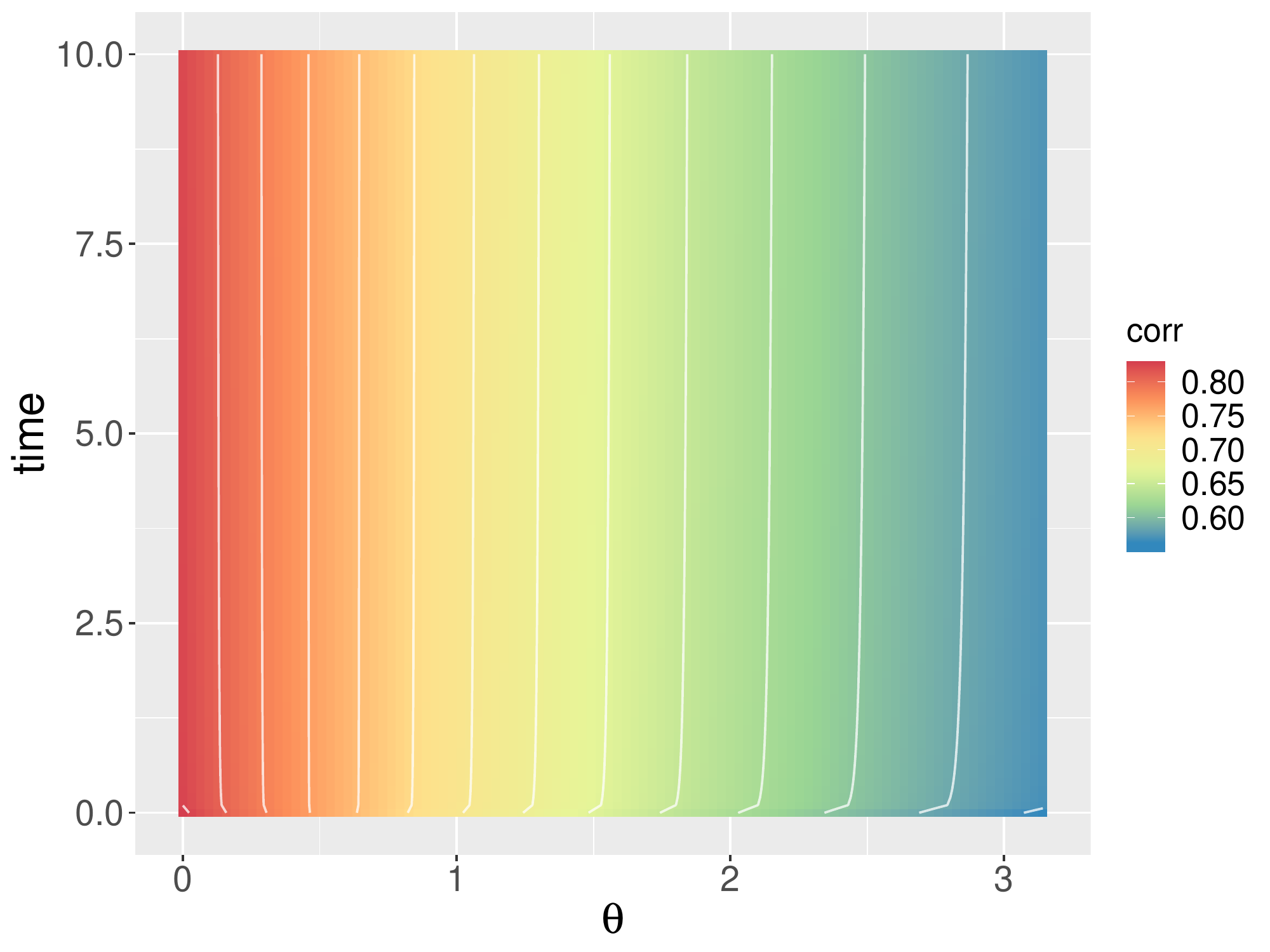}
\vspace{-5mm}
\caption{Posterior mean correlation contour plot for the near-surface air temperature.}\label{fig:air_contour}
\end{center}
\end{figure}
We can also compute posterior summaries for $\sigma^2/ (\sigma^2 + \tau^2)$, interpreted as the proportion of total variance attributable to the spatiotemporal random effect. Here, the posterior mean of $\sigma^2/ (\sigma^2 + \tau^2)$ is 0.834. In other words, our selected covariance model accounts for 83.4\% of the total variance. In Table \ref{tab:post_sum1} and Figure \ref{fig:air_contour}, $c_s$ and $c_t$ suggest that the surface temperature process exhibits persistent correlation over both space and time. Low values of $\alpha$ in \eqref{eq:our_mod1} add temporal smoothness.

\subsection{Total Cloud Coverage Reanalysis Data}\label{sec:cloud}

For this section, we utilize the 2017 National Centers for Environmental Prediction/National Center for Atmospheric Research daily average total cloud coverage reanalysis data \citep{kalnay1996}. Total cloud coverage is defined as the fraction of the sky covered by any visible clouds, regardless of type. Total cloud coverage takes values between 0 and 100, representing a percentage of cloud coverage. Values of total cloud coverage close to 0 indicate clear skies, values from 40 to 70 percent represent broken cloud cover, and overcast skies correspond with 70 to 100 percent. 

The degree of cloudiness impacts how much solar energy radiates to the Earth \citep[see, e.g.,][]{svensmark1997}. Total cloud coverage, like changes in global surface temperature, has been impacted by global climate changes \citep[see, e.g.,][]{melillo1993,wylie2005}, and changes in total cloud coverage are linked with many biological changes \citep[see][]{pounds1999}. Thus, tracking, predicting, and anticipating changes in cloudiness have important implications for understanding global climate changes and their effects on ecosystems. 

The daily total cloud coverage reanalysis data represent daily averages and are given on a global grid with $1.9^{\circ}$ spacing for latitude and $1.875^\circ$ spacing for longitude. The spatial averages of cloud coverage over the first week of January are plotted in Figure \ref{fig:cloud_map}. This map shows clear spatial variability that suggests that a spatial model is appropriate. We provide density estimates of total cloud coverage for each day of the week in Figure \ref{fig:cloud_hist}. As we did with temperature, we see from Figure \ref{fig:cloud_hist} that cloud coverage appears similar across days.
\begin{figure}[H]
\begin{center}
   \begin{subfigure}[b]{.58\textwidth}
\includegraphics[width=\textwidth]{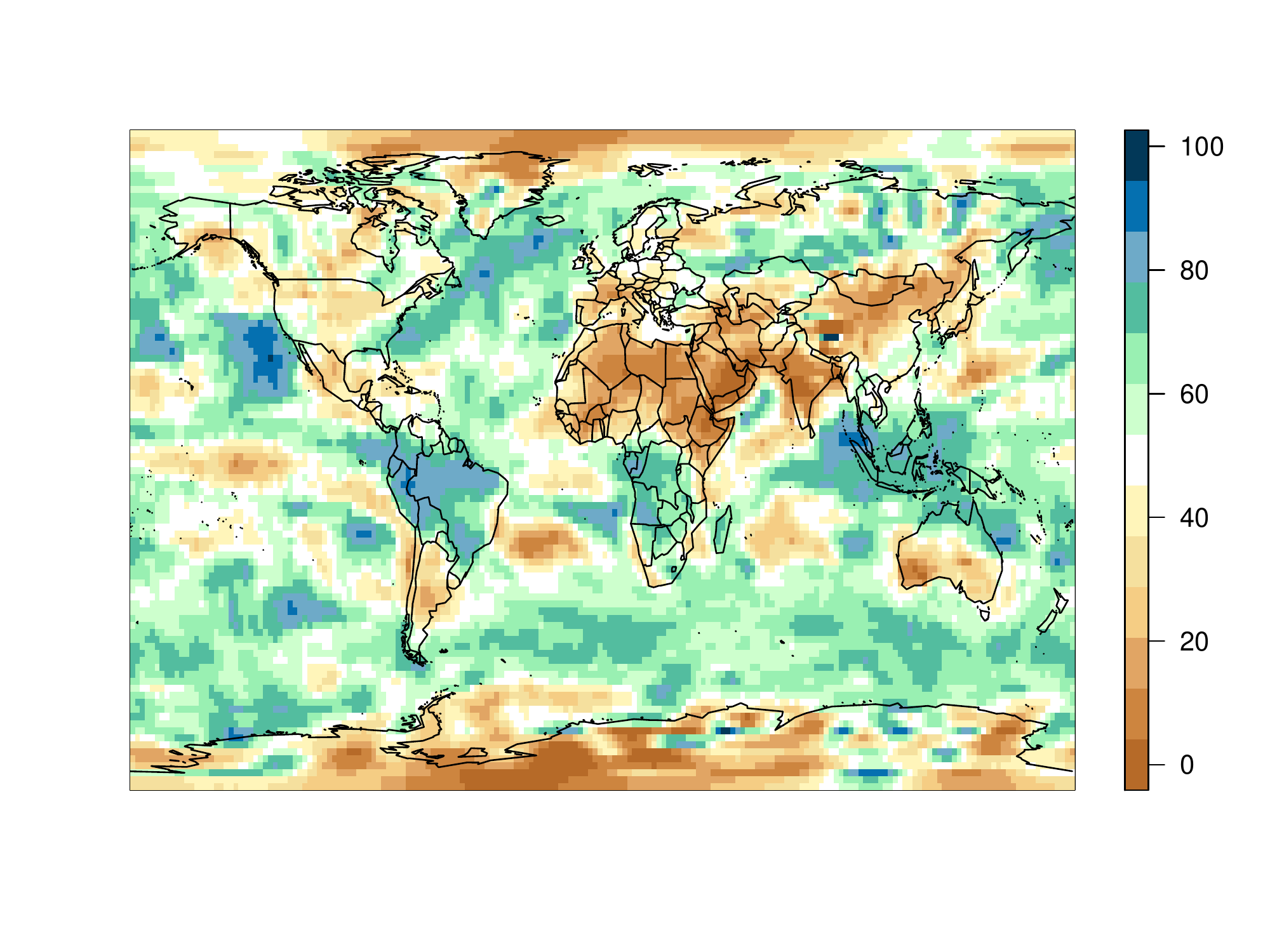}
\vspace{-10mm}
      \subcaption{ Total cloud coverage heat map in percentage, taking values 0 to 100.}\label{fig:cloud_map}
   \end{subfigure}
      \begin{subfigure}[b]{.4\textwidth}
\includegraphics[width=\textwidth]{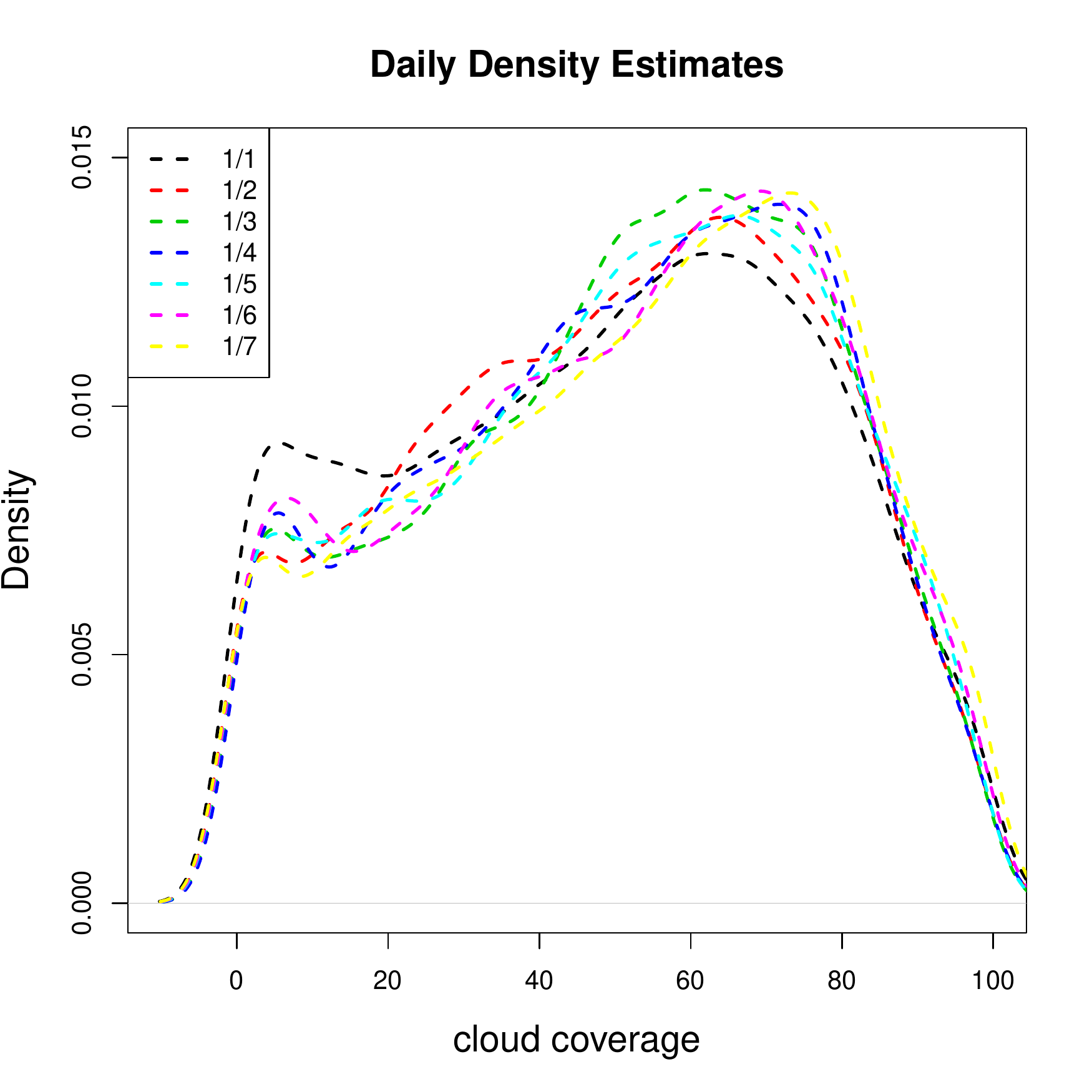}
      \subcaption{Kernel density estimates of total cloud coverage for each day.}\label{fig:cloud_hist}
   \end{subfigure}
\end{center}
\end{figure}

Again, we thin the data to $3.8^{\circ}$ spacing for latitude and $3.75^{\circ}$ for longitude to carry out model comparison on hold-out data. In total, we have 4512 unique spatial locations, giving 31584 total observations. With 31584 data, carrying out fully Bayesian inference using a full Gaussian process model is intractable; thus, we utilize a nearest neighbor Gaussian process model. For these models, we use the same neighborhood formulations and fit the same covariance models with the same prior distribution to these data as we did in Section \ref{sec:air}. 

To obtain a test set to compare the six competing models, we randomly select 1000 locations and predict at these locations over the time-span of our data, giving a test set of size 7000. These hold-out locations are plotted in Figure \ref{fig:cloud_holdout} in Appendix \ref{app:holdout}. As before, we compare these models in terms of predictive mean squared error, mean absolute error, continuous ranked probability scores, and 90\% prediction interval coverage, as discussed in Section \ref{sec:mod_comp}. The results of the model comparison are given in Table \ref{tab:mod_comp2}. 

\begin{table}[H]
\centering
\begin{tabular}{lrrrrr}
  \hline
 Equation & PRMSE & PMAE & 90\% Coverage & CRPS & Relative CRPS \\ 
  \hline

(\ref{eq:gneiting-chordal}) and $\nu=1/2$ & 13.92 & 11.08 & 0.95  & 7.85 & 1.32 \\
(\ref{eq:gneiting-chordal}) and $\nu=3/2$  & 13.35 & 10.51 & 0.94  & 7.53 & 1.26\\
 (\ref{eq:invgneiting1})  & 19.89 & 8.27 & \textbf{0.90}  & 8.27 & 1.39 \\
 (\ref{eq:invgneiting2})  & 19.31 & 8.18 & \textbf{0.90}  & 8.15 & 1.37 \\
 (\ref{eq:our_mod1})  & 13.24 & 10.41 & 0.95  & 7.51 & 1.26 \\ 
(\ref{eq:our_mod2}) & \textbf{10.68} & \textbf{7.66} & 0.95  & \textbf{5.96} & \textbf{1.00} \\ 
   \hline
\end{tabular}
\caption{Predictive performance of competing covariance models for the total cloud coverage dataset. For brevity in the table, let PRMSE and PMAE denote predictive mean squared error and mean absolute error, respectively. Relative CRPS is scaled such that the lowest is one. Bolded entries are used to indicate best model performance, i.e. lowest PRMSE, PMAE, and CRPS and 90\% interval coverage closest to 90\%. } \label{tab:mod_comp2}
\end{table}

Again, an example from our new class was best and models in terms of prediction for the total cloud coverage dataset. All competing models were at least 26\% worse in terms of continuous ranked probability score compared to the best model \eqref{eq:our_mod2}.

For the best predictive model in (\ref{eq:our_mod2}), we provide posterior summaries in Table \ref{tab:post_sum2}. Additionally, we display the correlation contour as function of spherical distance $\theta$ and time $t$ for the posterior mean in Figure \ref{fig:cloud_contour}. The scale of the plots in Figure \ref{fig:cloud_contour} are not the same as Figure \ref{fig:air_contour}. Correlation falls off sharply as a function of great circle distance. In this way, the total cloud coverage dataset differs greatly from the near-surface temperature dataset which demonstrated very persistent autocorrelation over space and time.
\begin{table}[H]
\centering
\begin{tabular}{rrrrr}
  \hline
  & Mean & Std. Err. &  2.5\%  & 97.5\%  \\ 
  \hline
    $\tau^2$  & 22.280 & 3.601 & 16.764 & 31.598 \\ 
 $\sigma^2$  & 595.93 & 7.677 & 580.318 & 610.792 \\ 
 $c_s$ &  0.102 & 0.004 & 0.094 & 0.110 \\
 $c_t$  & 6.762 & 1.794 & 3.416 & 9.803 \\  
$\alpha$  & 0.350 & 0.071 & 0.232 & 0.516 \\ 
$\beta$  & 0.952 & 0.049 & 0.817 & 0.999 \\ 
   \hline
\end{tabular}
\caption{Posterior summaries for the total cloud coverage dataset for parameters for the model fit using (\ref{eq:our_mod2}). Percentiles (2.5\% and 97.5\%) represent posterior percentiles.}\label{tab:post_sum2}
\end{table}
\begin{figure}[H]
\begin{center}
%ter\includegraphics[width=.48\textwidth]{cloud_cov_contour}
\includegraphics[width=.48\textwidth]{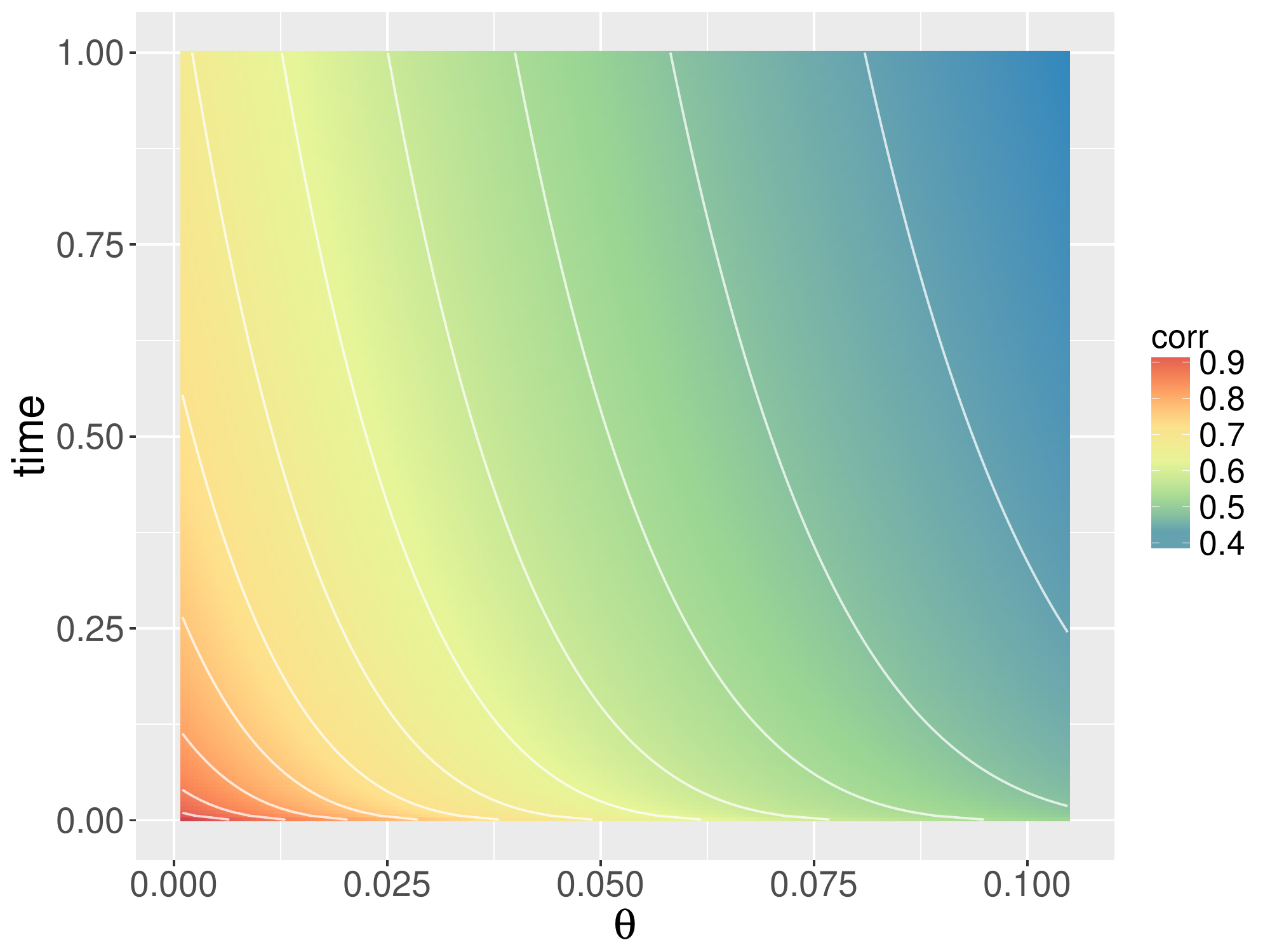}
\vspace{-5mm}
\caption{Posterior mean correlation contour plot for total cloud coverage data.}\label{fig:cloud_contour}
\end{center}
\end{figure}

For the total cloud coverage data, spatiotemporal variance $\sigma^2$ accounts for 96.40\% of the total variance $\sigma^2 + \tau^2$ (see Table \ref{tab:post_sum2}). In Table \ref{tab:post_sum2}, the parameter $c_t$ suggests that the surface temperature process exhibits persistent correlation over time; however, as discussed, the parameter $c_s$ indicates rapid decay as a function of space. The separability parameter $\beta \in [0,1]$ is close to one, meaning that the covariance process is nonseparable.

\section{Discussion and Conclusion}\label{sec:conc}

In this paper, we generalize the Gneiting criteria for nonseparable covariance functions \citep{gneiting2002} in Theorem \ref{partial-gneiting} and present new classes of nonseparable covariance models for spatiotemporally data in Theorems \ref{thm2} and \ref{heine}. In a simulation study, we explored the identifiability of covariance parameters for two covariance functions from Theorem \ref{thm2} and noted that these covariance functions have limited parameter identifiability. However, some of these challenges are remedied by fixing the spatial range parameter. We then illustrate the utility of our new Gneiting-like class using spherical distance through two climate reanalysis datasets from the National Centers for Environmental Prediction and National Center for Atmospheric Research \citep{kalnay1996}. In these two data examples, covariance models from Theorem \ref{thm2} outperform similar stationary nonseparable covariance models from \citet{gneiting2002} and \citet{Porcu-Bevilacqua-Genton} using continuous ranked probability scores, root mean squared error, and mean absolute error. As discussed, we do not suggest that covariance functions from our new covariance classes are preferable for all datasets. However, these results highlight the benefit of allowing spatial distance to be scaled by the difference in time and the importance of using the spherical distance relative to Euclidean distance or chordal distance for these datasets.

The result in Theorem \ref{partial-gneiting} presents a key for extending results obtained in Euclidean spaces to spheres cross time. Due to the lack of literature for multivariate cross-covariance models on spheres over time, with the notable exception of \cite{alegria}, we recommend this as a valuable area of expansion. In addition, the development of nonstationary covariance models for spheres cross time is an important direction for future research. 

\appendix

\section{Proofs for Theorems}\label{app:proof_all} \label{C}
%\subsection*{Proof of Theorems}
\textcolor{white}{ }

\subsection*{Proof of Theorem \ref{partial-gneiting}}

 We start by proving the implication $(1) \longrightarrow (2)$. Let $C \in \Psi_{d,T}$.   Then, according to Theorem 3.3 in \cite{berg-porcu}, $C$ admits the expansion
 \begin{equation}
 \label{A2.1} C(\theta,u) = \sum_{k=0}^{\infty} b_{k,d}(u) {\cal G}_{k}^{(d-1)/2} (\cos \theta), \qquad (\theta,u) \in [0,\pi] \times \R,
 \end{equation} with $b_{k,d}$ being positive definite on $\R$ for all $k=0,1,\ldots$, and with $\sum_{k=0}^{\infty} b_{k,d}(0) < \infty$. 
 Let us start by noting that the assumption $\sum_k \int \big|  b_{k,d}(u)\big|  {\rm d}u < \infty $  
 implies $C(\theta,\cdot ) \in L_1(\R )$ for all $\theta \in [0,\pi]$. In fact,
\begin{eqnarray*}
 \int_{-\infty}^{\infty} \mid  C(\theta,u) \mid  {\rm d}u&=& \int_{-\infty}^{\infty} \bigg |  \sum_{k=0}^{\infty} b_{k,d}(u) {\cal G}_{k}^{(d-1)/2}(\cos \theta)\bigg | {\rm d} u   \\ & \le & \sum_{k=0}^{\infty} \int_{-\infty}^{\infty} \bigg | b_{k,d}(u) \bigg |  {\rm d} u < \infty, 
\end{eqnarray*}
where the last step is justified by the fact that, for normalized Gegenbauer polynomials, $\big|   {\cal G}^\lambda_k(u) \big|  \le {\cal G}_k^\lambda(1)$, $\lambda > 0$.  

%\fbox{PG: from here} \\
Let $C_{\tau}$ be the function defined through (\ref{partial-gneiting2}). Since $C(\theta,\cdot) \in L_1(\R)$ for all $\theta$, and using Lebsegue's theorem, we have 
\begin{eqnarray*}
C_{\tau}(\theta) &=&  \int_{-\infty}^{+\infty} {\rm e}^{- \mathsf{i} u \tau} C(\theta,u) {\rm d} u \\
&=&   \int_{-\infty}^{+\infty} {\rm e}^{- \mathsf{i} u \tau}  \sum_{k=0}^{\infty} b_{k,d}(u) {\cal G}_{k}^{(d-1)/2} (\cos \theta) {\rm d} u \\
&=&  \sum_{k=0}^{\infty} \widehat{b}_{k,d}(\tau) {\cal G}_{k}^{(d-1)/2} (\cos \theta), \qquad \theta \in [0,\pi],
\end{eqnarray*}
where $\widehat{b}_{k,d}(\tau)= \int_{-\infty}^{+\infty} {\rm e}^{- \mathsf{i} u \tau}b_{k,d}(u) {\rm d} u$.
Clearly, for any $k=0,1, \ldots$ we have that $\widehat{b}_{k,d}$ is nonnegative and additionally $\widehat{b}_{k,d} \in L_1(\R)$. To complete the proof, we invoke the theorem of \cite{schoenberg} and thus need to show that $\sum_k \widehat{b}_{k,d}(\tau) < \infty$ for all $\tau \in \R$. Again invoking Lebsegue's theorem we have
\begin{eqnarray*}
 \sum_{k=0}^{\infty}  \widehat{b}_{k,d}(\tau) &=& \sum_{k=0}^{\infty} \int_{-\infty}^{+\infty} {\rm e}^{- \mathsf{i} u \tau}b_{k,d}(u) {\rm d} u \\ &=& \int_{-\infty}^{+\infty} {\rm e}^{- \mathsf{i} u \tau} \sum_{k=0}^{\infty} b_{k,d}(u) {\rm d} u   \\
&=&  \int_{-\infty}^{+\infty} {\rm e}^{- \mathsf{i} u \tau} B_d(u) {\rm d}u,
 \end{eqnarray*}
with $B_{d}(u)= \sum_{k=0}^{\infty} b_{k,d}(u)$. Using the fact that $\sum_{k=0}^{\infty} b_{k,d}(0)<\infty$ and that $b_{k,d}(0) \geq \mid b_{k,d}(u)\mid $ for all $u \in \R$ (because $b_{k,d}$ are positive definite for all $k$), we get
$$ \infty > \sum_{k=0}^{\infty} b_{k,d}(0) \ge \sum_{k=0}^{\infty} \mid b_{k,d}(u) \mid  \ge \sum_{k=0}^{\infty}b_{k,d}(u) = B_d(u), \quad u \in \R, $$
showing that $B_d$ is bounded and continuous. Further, $B_d$ is positive definite on $\R$ because positive definite functions are a convex cone being closed under pointwise convergence. To complete the result, we need to prove that $B_d \in L_1(\R)$. This comes from the fact that 
$$ B_d(u) = C(0,u) \in L_1(\R). $$
The proof is completed.  % \fbox{until here}

To prove $(2) \longrightarrow (1)$, we let $C_{\tau}$ as defined through (\ref{partial-gneiting2}) and suppose that $C_{\tau} \in \Psi_d$ a.e. $\tau \in \R $. By \cite{schoenberg} theorem, we have that
\begin{equation}
\label{d-schoen} \widetilde{b}_{k,d}(\tau): = \kappa \int_{0}^{\pi} C_{\tau}(\theta) {\cal G}_k^{(d-1)/2}(\cos \theta) \sin \theta^{d- 1} {\rm d} \theta, \qquad \tau \in \R, 
\end{equation}is nonnegative, where $\kappa>0$ \citep{berg-porcu}. Using again \cite{schoenberg} theorem, we can write $C_{\tau}$ as
$$ 
C_{\tau}(\theta) = \sum_{k=0}^{\infty} \widetilde{b}_{k,d}(\tau) {\cal G}_k (\cos \theta), \qquad \theta \in [0,\pi].
$$
Since $C_{\tau} \in \Psi_{d}$ a.e. $\tau$, this in turn implies $\infty >C_{\tau}(0) = \sum_k \widetilde{b}_{k,d}(\tau)$ for all $\tau \in \R$. We now define 
$$ \widetilde{B}_d(\tau):= \sum_{k=0}^{\infty} \widetilde{B}_{k,d}(\tau), \qquad \tau \in \R. $$
Apparently $\widetilde{B}_d$ is nonnegative. Let us now show that $\widetilde{B}_d \in L_1(\R)$. To do so, we note that 
\begin{eqnarray*}
\int_{-\infty}^{+\infty} \Big |  \widetilde{B}_d(u) \Big |{\rm d} u & = & \int_{-\infty}^{+\infty} \Big |  \sum_{k=0}^{\infty} \widetilde{b}_{k,d}(u) \Big | {\rm d} u \\
& \le & \sum_{k=0}^{\infty} \int_{-\infty}^{+\infty} \Big |  \widetilde{b}_{k,d}(u) \Big | {\rm d} u < \infty,
\end{eqnarray*}
so that $\widetilde{B}_d \in L_1(\R)$ as asserted. This in turn implies that $b_{k,d} \in L_1(\R)$ for all $k = 0,1,\ldots$. Thus, we can define a function $C:[0,\pi] \times \R \to \R$ through
\begin{eqnarray*}
 C(\theta,u) &=& \frac{1}{2\pi} \int_{-\infty}^{+\infty} {\rm e}^{- \mathsf{i} u \tau} C_{\tau}(\theta) {\rm d} \tau \frac{1}{2\pi} \int_{-\infty}^{+\infty} {\rm e}^{- \mathsf{i} u \tau} \sum_{k=0}^{\infty } \widetilde{b}_{k,d}(\tau) {\cal G}_{k} (\cos \theta)  \\ 
&=&   \sum_{k=0}^{\infty} b_{k,d}(u) {\cal G}_{k} (\cos \theta), \qquad \theta\in [0,\pi], u\in \R,
\end{eqnarray*}

 and where $b_{k,d}(\cdot)=1/(2 \pi) \int {\rm e}^{\mathsf{i} \cdot \tau} \widetilde{b}_{k,d}(\tau) {\rm d } \tau$ is positive definite on $\R$ for all $n \in \mathbb{N}$. 
 Thus, the proof is completed by invoking Theorem 3.3 in \cite{berg-porcu} and by verifying that $$ \sum_{k=0}^{\infty} b_{k,d}(0) = \sum_{k=0}^{\infty} \int_{-\infty}^{+\infty} \widetilde{b}_{k,d}(\tau) {\rm d} \tau < \infty.$$ 
We now prove the implication $(2) \longrightarrow (3)$. Since $C_{\tau} \in \Psi_d$ for almost every $\tau \in \R$, we have that (\ref{d-schoen}) holds. This implies that $$ b_{k,d}(u) = \frac{1}{\pi} \int_{-\infty}^{+\infty} {\rm e}^{\mathsf{i} u \tau } \widetilde{b}_{k,d}(\tau) {\rm d} \tau, \qquad  k=0,1,\ldots, $$
is positive definite. Summability of the sequence $\{b_{k,d}(u) \}_{k=0}^{\infty}$ at $u=0$ follows easily from previous arguments.  \\
To prove the implication $(3) \longrightarrow (2)$, using (\ref{sch-coeff}) we have 
\begin{eqnarray*}
b_{k,d}(u) &=& \int_{0}^{\pi} C(\theta,u) {\cal G}_{k}^{(d-1)/2} (\cos \theta) \sin \theta^{d-1} {\rm d} \theta \\
&=& \frac{1}{2\pi} \int_{-\infty}^{+\infty} {\rm e}^{\mathsf{i}u \tau}  \int_{0}^{\pi} C_{\tau}(\theta) {\cal G}_{k}(\theta) \sin \theta^{d-1} {\rm d} \theta {\rm d} \tau,
\end{eqnarray*}
which shows that $C_{\tau} \in \Psi_d$ for all $\tau$ because the positive definiteness of $b_{k,d}(\cdot)$ implies, by Lemma 4.3 in \cite{berg-porcu}, the inner integral $\int_{0}^{\pi} C_{\tau}(\theta) {\cal G}_{k}(\theta) \sin \theta^{d-1} {\rm d} \theta$ to be nonnegative. \\
To conclude, the implication $(3 ) \longrightarrow (1)$ has been shown by \cite{berg-porcu}. \hfill $\Box$

\subsection*{Proof of Theorem \ref{thm2}}

We start by noting the beautiful formula
$$ \int_{0}^{\infty} {\rm e}^{-r x } {\rm e}^{- r \xi} {\rm d} r = \frac{1}{x+\xi}, \qquad \xi>0, \; x \ge 0. $$
 We now consider the function
$$ H_{\xi}(\theta,u) = \int_{0}^{\infty} {\rm e}^{-r \theta } {\rm e}^{- r \xi \psi(u^2)} {\rm d} r = \frac{1}{\psi(u^2)} \left ( \xi +\frac{\theta}{\psi(u^2)}\right )^{-1} , \qquad (\theta,u) \in [0,\pi] \times \R, $$
which shows that $H_\xi$ is positive definite on every $d$-dimensional sphere cross time ($\R$)
because the mapping $\theta \mapsto \exp(- r \theta)$ is positive definite on every $d$-dimensional sphere $\S^d$ \citep[][Theorem 7]{gneiting2013} and because $u \mapsto \exp(- \xi r \psi(u^2))$ is positive definite on the real line. Since $\varphi \in {\cal S}$, we have, using  (\ref{stieltjes}), that
$$ C(\theta,u) =\frac{1}{\psi(u^2)} \varphi \left (  \frac{\theta}{\psi(u^2)} \right )  =  \int_{0}^{\infty} H_{\xi}(\theta,u) \mu({\rm d} \xi), $$
and this proves the assertion.  \hfill $\Box$
%\end{proof}

\subsection*{Proof of Theorem \ref{heine}}

We make use of the arguments in the proof of Theorem \ref{thm2}, in concert with formula (15) on page 15 of \cite{bateman}:
\begin{eqnarray*}
&& \frac{\pi}{2 } \int_{0}^{\infty}  \cos(t \omega) \exp \left ( -(1+\omega^2) x \right ) \frac{{\rm d} \omega}{1 + \omega^2} \\&=& {\rm e}^{-u} {\rm erfc} \left ( \sqrt{x} - \frac{t}{2 \sqrt{x} }\right ) +  {\rm e}^{u} {\rm erfc} \left ( \sqrt{x} + \frac{t}{2 \sqrt{x} }\right ), 
\end{eqnarray*}
for $x,t \ge 0$. We now replace $x$ with $\psi_{[0,\pi]}(\theta)$ and $t$ with $|u| $. Since the composition of the negative exponential with a positive functions having completely monotonic derivative provides a completely monotonic function, we can invoke Theorem 7 in \cite{gneiting2013} to infer that the mixture above provides, in view of analogous arguments to the proof of Theorem \ref{thm2}, a positive definite function on $\S^d \times \R$ for all $d$. The proof is completed. \hfill $\Box$

\subsection*{Supplementary Theorem for Section \ref{Sec3a}}\label{app:extra_thm}

We now show how Theorem \ref{partial-gneiting} can be useful to understand connections and analogues between the class of positive definite functions $\R \times \R$ and positive definite functions on the circle $\S^1$ cross $\R$.

\begin{theorem} \label{alan-gelfand}
Let $\varphi: \R \times \R \to \R$ be a covariance function that is symmetric in both arguments. Let $\varphi_{\tau}:\R \to \R$ be the function defined by
\begin{equation} \label{ep}
\varphi_{\tau}(x) = \int_{-\infty}^{+ \infty } {\rm e}^{\mathsf{i} u \tau } \varphi(x,u) {\rm d} u, \qquad x \in \R.
\end{equation} and suppose that such an integral is well defined. 
Let $\varphi_{\tau}(x)=0$ whenever $| x|  \ge \pi$. Call $C(\theta,u)= \varphi_{[0,\pi]}(\theta,u)$, $\theta \in [0,\pi]$, $u \in \R$, where the restriction to $[0,\pi]$ is with respect to the first argument. Then, $C(\theta,u)$ is a covariance function on $\S^3 \times \R$.
\end{theorem}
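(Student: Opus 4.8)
The plan is to derive the statement from the equivalence of assertions $(1)$ and $(2)$ in Theorem~\ref{partial-gneiting}, applied with $d=3$. By that equivalence it suffices to show that the partial temporal transform $C_{\tau}(\theta)=\int_{-\infty}^{+\infty}{\rm e}^{-\mathsf{i}u\tau}C(\theta,u)\,{\rm d}u$ from \eqref{partial-gneiting2} is a covariance function on $\S^3$ for almost every $\tau\in\R$. Because $\varphi$ is symmetric in its temporal argument, the exponentials ${\rm e}^{\pm\mathsf{i}u\tau}$ give the same value, so comparing \eqref{ep} with \eqref{partial-gneiting2} yields $C_{\tau}(\theta)=\varphi_{\tau}(\theta)$ for $\theta\in[0,\pi]$. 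The problem therefore collapses to a statement about the single-variable functions $\varphi_{\tau}$, and the spatiotemporal structure is entirely handled by Theorem~\ref{partial-gneiting}.

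First I would record the properties of $\varphi_{\tau}$. Since $\varphi$ is a continuous, bounded, symmetric covariance function on $\R\times\R$ and, by the standing assumption that the defining integral is well posed, integrable in its spatial argument, Gneiting's Euclidean criterion \eqref{criterion} with $d=1$ shows that $\varphi_{\tau}$ is positive definite on $\R$ for almost every $\tau$. Symmetry of $\varphi$ in its first argument makes each $\varphi_{\tau}$ even, and the hypothesis $\varphi_{\tau}(x)=0$ for $|x|\ge\pi$ confines its support to $[-\pi,\pi]$. Thus, for almost every $\tau$, the function $\varphi_{\tau}$ is even, continuous, compactly supported in $[-\pi,\pi]$, and positive definite on the line; the remaining task is to convert this Euclidean positive definiteness into positive definiteness on the sphere and then invoke the implication $(2)\Rightarrow(1)$ of Theorem~\ref{partial-gneiting}.

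The hard part will be exactly this conversion. Being positive definite and integrable, $\varphi_{\tau}$ has a nonnegative Fourier transform $\widehat{\varphi_{\tau}}$, and the support condition $\mathrm{supp}\,\varphi_{\tau}\subseteq[-\pi,\pi]$ means that sampling $\widehat{\varphi_{\tau}}$ at the integers reproduces the Fourier coefficients of the $2\pi$-periodic extension of $\varphi_{\tau}$, which are therefore nonnegative; this is precisely what Schoenberg's characterization needs for positive definiteness on the \emph{circle}. The delicate point is that for $\S^3$ the zonal basis consists of the Chebyshev polynomials of the second kind, ${\cal G}_{k}^{1}(\cos\theta)\propto \sin((k+1)\theta)/\sin\theta$, so the Gegenbauer coefficients \eqref{sch-coeff} of $\varphi_{\tau}|_{[0,\pi]}$ come out as the second differences $\widehat{\varphi_{\tau}}(k)-\widehat{\varphi_{\tau}}(k+2)$ of the sampled transform rather than as the samples themselves. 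I would therefore spend the bulk of the argument showing that the compact-support threshold $\pi$, together with nonnegativity of $\widehat{\varphi_{\tau}}$ on all of $\R$ (not merely at integers), forces these differences to be nonnegative for the targeted sphere; aligning the line's spectrum with the sphere's harmonic coefficients at the correct dimension is the crux, and it is where the support hypothesis must be used in an essential, dimension-sensitive way.
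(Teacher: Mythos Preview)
Your plan matches the paper's proof: both reduce via Theorem~\ref{partial-gneiting} (assertion~2, with $d=3$) and Gneiting's Euclidean criterion to showing that each $\varphi_\tau|_{[0,\pi]}$ is positive definite on $\S^3$, and both correctly isolate this as the only nontrivial step. The difference is in how that step is dispatched. You plan to argue directly that the lag-two differences $\widehat{\varphi_\tau}(k)-\widehat{\varphi_\tau}(k+2)$ are nonnegative, using nonnegativity of $\widehat{\varphi_\tau}$ on all of $\R$ together with the support constraint. The paper does not work this out; it simply invokes Corollary~3 of \cite{gneiting2013}, which asserts that a continuous function positive definite on $\R$ and vanishing outside $[-\pi,\pi]$ has $\S^1$-Schoenberg coefficients $b_{k,1}$ that are nonnegative and \emph{decreasing} in $k$. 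Since, as you computed, $b_{k,3}\propto b_{k,1}-b_{k+2,1}$, this immediately gives positive definiteness on $\S^3$. So the ``bulk of the argument'' you anticipate is available off the shelf. One small terminological correction: $\widehat{\varphi_\tau}(k)-\widehat{\varphi_\tau}(k+2)$ is a first difference at lag two, not a second difference.
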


\begin{proof}
Since $\varphi$ is positive definite in $\R \times \R$, by Lemma 1 in \cite{gneiting2002} we get that $\varphi_{\tau}$ is positive definite in $\R$ a.e. $\tau \in \R$. Additionally, $\varphi_{\tau}(x)=0$ whenever $|x|\ge \pi$. Call $\psi_{\tau}$ the restriction of $\varphi_{\tau}$ to $[0,\pi]$. By Corollary 3 in \cite{gneiting2013} we have that the coefficients $\widetilde{b}_{k,1}(\tau)$, $\tau \in \R$ in the Schoenberg expansion of $\psi_{\tau}$, as defined in (\ref{d-schoen}) are nonnegative and strictly decreasing in $k$ for any fixed $\tau \in \R$. This implies that $\varphi_{\tau}(\theta)$
is positive definite in $\S^3$
 for almost every $\tau \in \R$. Application of Theorem \ref{partial-gneiting}, Assertion 2, shows that $C(\theta,u)$ is positive definite in $\S^3 \times \R $. The proof is completed. 
\end{proof}

\section{Modeling Details for the Nearest Neighbor Gaussian Process}\label{app:NNGP}

Suppose we begin with a parent Gaussian process over $\mathbb{R}^d \times \mathbb{R}$ or $\S^{d-1} \times \mathbb{R}$. Nearest neighbor Gaussian processes induce sparsity in the precision matrix of the parent Gaussian process by assuming conditional independence given neighborhood sets \citep{datta2016a,datta2016c}. Let $\mathcal{S} = \{(\bs_1,t_1), (\bs_2,t_2), ..., (\bs_k,t_k) \}$ of $k$ distinct location-time pairs denote the reference set, where we allow time to act as a natural ordering and impose an ordering on the locations observed at identical times. Then, we define neighborhood sets $N_\mathcal{S} = \{N(\bs_i,t_i); i = 1,...,k \}$ over the reference set with $N(\bs_i,t_i)$ consisting of the $m$ nearest neighbors of $(\bs_i,t_i)$, selected from $\{(\bs_1,t_1),(\bs_2,t_2),...,(\bs_{i-1},t_{i-1}) \}$. If $i \leq m+1$, $N(\bs_i,t_i) = \{(\bs_1,t_1), (\bs_2,t_2), ..., (\bs_{i-1},t_{i-1} ) \}$. For the Gibbs sampler, we need to define an inverse of the neighborhood set, which we call $U(\bs_i,t_i)$. The set $U(\bs_i,t_i)$ consists of all sites that include $(\bs_i,t_i)$ in their neighborhood sets. 

Along with $\mathcal{S}$, $N_\mathcal{S}$ defines a Gaussian directed acyclic graph $\bw_\mathcal{S}$ with a joint distribution
\begin{equation}
p( \bw_\mathcal{S} ) = \prod^k_{i=1} p(w(\bs_i,t_i)\mid \bw_{N(\bs_i,t_i)})
= \prod^k_{i=1} \mathcal{N}(\bw(\bs_i,t_i)\mid  \bB_{(\bs_i,t_i)} \bw_{N(\bs_i,t_i)}, \bF_{(\bs_i,t_i)}), 
\end{equation}  
where $\mathcal{N}$ is a normal distribution,
\begin{align*}
\bB_{(\bs_i,t_i)} &= C_{(\bs_i,t_i),N(\bs_i,t_i)} C_{N(\bs_i,t_i)}^{-1}, \\
\bF_{(\bs_i,t_i)} &= \sigma^2 - \bB_{(\bs_i,t_i)} C_{N(\bs_i,t_i),(\bs_i,t_i)}, 
\end{align*}
$C_{(\bs_i,t_i),N(\bs_i,t_i)}$ is a vector of covariances between $(\bs_i,t_i)$ and its neighbors, $C_{N(\bs_i,t_i)}$ is the covariance matrix for the neighbors of $(\bs_i,t_i)$, and $\bw_{N(\bs_i,t_i)}$ is the subset of $\bw_\mathcal{S}$ corresponding to neighbors $N(\bs_i,t_i)$ \citep{datta2016a}. \citet{datta2016a} extend this Gaussian directed acyclic graph to a GP. This Gaussian process formulation only requires storage of $k$ $m \times m$ distance matrices and requires many fewer floating point operations than full Gaussian process models \citep[see][]{datta2016a}. Like any other GP model, the NNGP can be utilized hierarchically for spatiotemporal random effects. In this article, we use NNGP as an alternative to the full Gaussian process specification.

We envision our model taking the following form:
\begin{align}
Y(\bs,t) &= \bx(\bs,t)^\top \bbeta + w(\bs,t) + \epsilon(\bs,t), \\
w(\bs,t) &\sim NNGP(0,C((\bs,t),(\bs',t'))),  \nonumber \\
\epsilon(\bs,t) &\sim GP(0,\tau^2 \delta^\bs_\bs \delta^t_t),
 \nonumber
\end{align} 
where $Y(\bs,t)$ is a spatiotemporal process measured (with error), $\bx(\bs,t)$ are $p$ spatiotemporal covariates, and $\delta^b_a$ is the Kronecker delta function. We define $C((\bs,t),(\bs',t'))$ using a covariance model discussed in Section \ref{Sec2} or Section \ref{sec:theory}. We recommend using inverse gamma (IG) prior distributions for $\tau^2$ from the pure error term and $\sigma^2$ from the covariance function because this selection gives closed form full conditional distributions. If the outcomes and covariates are centered, then an intercept is unnecessary. If covariates are not available, then $\bx(\bs,t)^\top \bbeta$ is replaced with $\mu$. Here, for more compact notation, we index space-time location pairs with $i$ as $(\bs_i,t_i)$ and refer to corresponding outcomes, covariates, and spatiotemporal random effects as $y_i$, $\bx_i$, and $w_i$, respectively.

The prior mean and variance for regression coefficients $\bbeta$ are $m_\beta$ and $V_\beta^{-1}$, respectively. Additionally, let $a_V$ and $a_\tau$ be shape parameters for the inverse gamma prior distributions for $\sigma^2$ and $\tau^2$. Similarly, let $b_V$ and $b_\tau$ be scale parameters (corresponding to rate parameter of the gamma distribution) for the inverse gamma prior distributions for $\sigma^2$ and $\tau^2$. 

The full conditional distributions for the Gibbs sampler, which we denote $\cdot \mid  \cdots$, are
\begin{align*}
\bbeta \mid  \cdots &\sim \mathcal{N}_p( V_\beta^* m_\beta^*,V_\beta^*) \\
\tau^2 \mid  \cdots &\sim IG(a_{\tau}^*,b_{\tau}^*) \\
\sigma^2\mid  \cdots &\sim IG(a_V^*,b_V^*) \\
\bw_i \mid  \cdots &\sim \mathcal{N}_1( V_{w_i}^* m_{w_i}^*,V_{w_i}^*).
\end{align*} 
To express $V_{w_i}^*$ and $m_{w_i}^*$, we must define some additional terms. First, we let $B_{(\bs',t'),(\bs_i,t_i)} $ be the scalar in $B_{(\bs',t')}$ corresponding to $(\bs_i,t_i)$. Second, we define $$a_{(\bs',t'),(\bs_i,t_i)} = w(\bs',t') - \sum_{(\bs_j,t_j) \in N(\bs',t'),(\bs_j,t_j) \neq (\bs_j,t_j) } B_{(\bs',t'),(\bs_i,t_i) } w(\bs_i,t_i). $$ For more details, see \citet{datta2016a}. 

The parameters of the full conditional distributions are as follows:
{ %\scriptsize
\begin{align*}
    V_\beta^* &= \left( \bX^\top\bX /\tau^2 + V_\beta^{-1} \right)^{-1} \\
    m_\beta^* &= V_\beta^{-1} m_\beta + \sum_i\bx_{i} ( y_{i} - w_i) / \tau^2  \\
    V_{w_i}^* &= \left(1/\tau^2 + F_{(\bs_i,t_i)}^{-1} + \sum_{(\bs',t'): (\bs',t') \in U(\bs_i,t_i)}   B_{(\bs',t'),(\bs_i,t_i)}^2 / F_{(\bs',t')}   \right)^{-1}\\
  m_{w_i}^* &= (y_{i} - \bx_{i}^\top \bbeta ) /\tau^2 +  B_{(\bs_i,t_i)} \bw_{N(\bs_i,t_i)} / F_{(\bs_i,t_i)} + \\ & \sum_{(s',t'): (s',t') \in U(\bs_i,t_i)}  B_{(\bs',t'),(\bs_i,t_i)} a_{(\bs',t'),(\bs_i,t_i)} /  F_{(\bs',t')}  \\
a_V^* &= a_V + n/2 \\
b_V^* &= b_V + \sigma^2 \sum_i  (w_i - B_{(\bs_i,t_i)} {\bw_{N(\bs_i,t_i)}} )^\top (w_{i} - B_{(\bs_i,t_i)} {\bw_{N((\bs_i,t_i))}}) /F_{(\bs_i,t_i)} \\
a_{\tau}^* &= a_\tau + \frac{n}{2} \\
b_{\tau}^* &= b_\tau + \frac{1}{2} \sum_i  (y_{i} - \bx_{i}^\top \bbeta - w_i)^2 .  \\
\end{align*}
}

Prediction at an arbitrary location and time requires selection of $m$ nearest neighbors from the reference set $\mathcal{S}$ for that location-time pair. We discuss two ways of selecting $m$ neighbors. In theory, any location-time pair from the reference set can be selected as a neighbor for any prediction. If we allow predictions to depend on data occurring after the prediction time, then we call this a \emph{retrospective} prediction since such a prediction could only be made retrospectively. On the other hand, a \emph{prospective} prediction limits neighbor selection to elements of $\mathcal{S}$ that occur at the same time or prior to the time of prediction. \citet{datta2016c} selects neighbors for prospective predictions, and we do the same in our analyses.

Predicted spatiotemporal random effects at location-time pairs follow a conditional normal distribution, where conditioning is limited to its neighbors. For any space-time pair $(\bs,t)$, the conditional distribution of the random effect is 
\begin{equation}
\bw(\bs) \vert \bw_{N(\bs)} \sim \mathcal{N}\left( C_{\bs,N(\bs)} C_{N(\bs)}^{-1} \bw_{N(\bs)} , \\
C(\bs,\bs) - C_{\bs,N(\bs)} C_{N(\bs)}^{-1} C_{\bs,N(\bs)}^\top \right). \label{eq:cond}
\end{equation}
Then, the posterior prediction $Y(\bs,t) \mid  \bY$ is $\bx(\bs,t)^\top \bbeta + w(\bs,t) + \epsilon_i(\bs,t)$, where, in practice, posterior samples of $\bbeta$, $w(\bs,t)$, and $\tau^2$ are used to sample from the posterior predictive distribution. Predictions at hold-out location-time pairs can be used to compare competing models. 

\section{Locations of Hold-out Data from Section \ref{sec:Data}}\label{app:holdout}

Here, we provide the locations of hold-out data used for model validation. The locations for hold-out air temperature are given in Figure \ref{fig:air_holdout}. The locations for hold-out cloud coverage are in Figure \ref{fig:cloud_holdout}.

\vspace{-20mm}
\begin{figure}[H]
\begin{center}
\includegraphics[width=.8\textwidth]{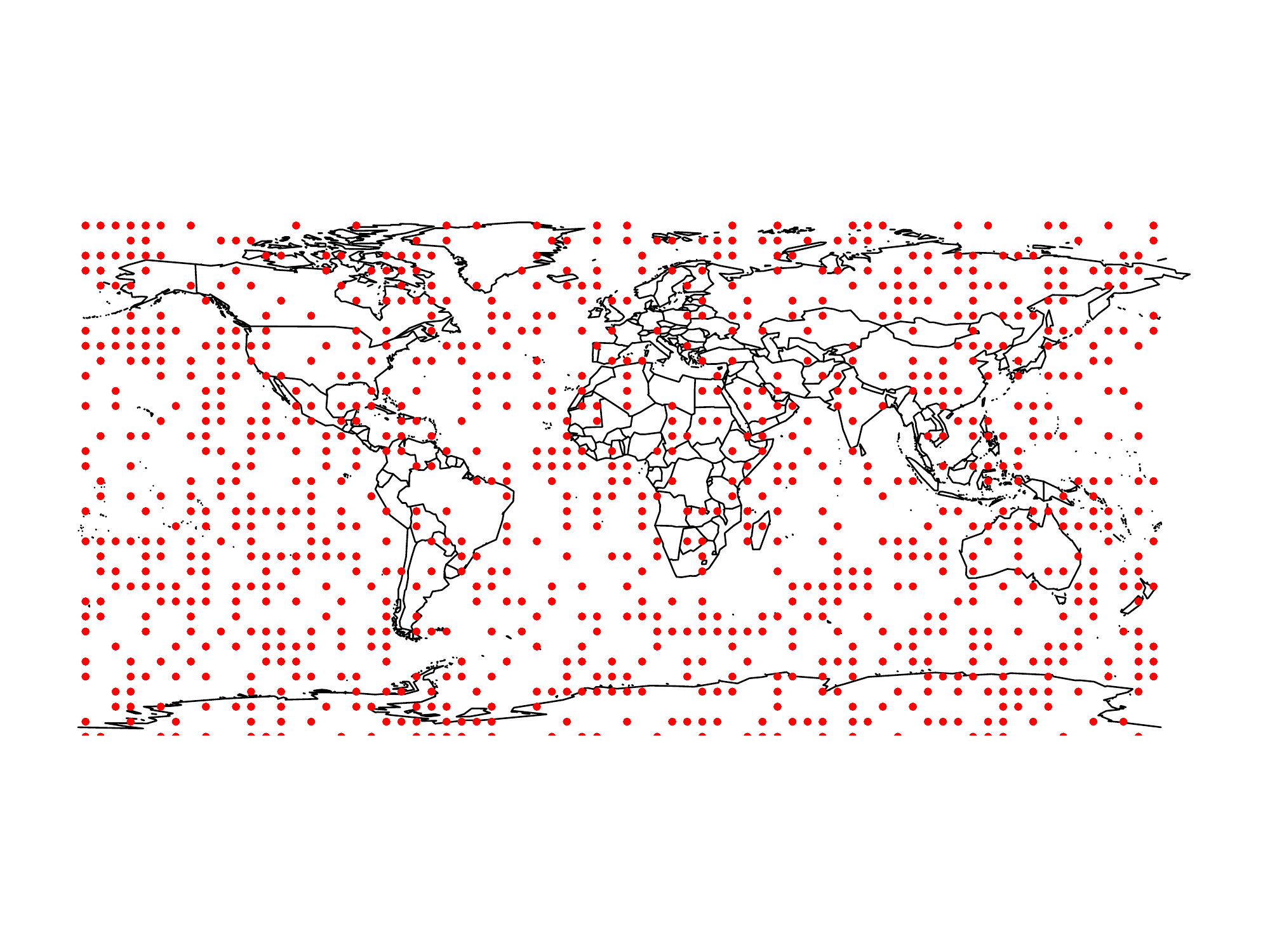}
\vspace{-20mm}
\caption{Hold-out locations used for predictive performance for the near-surface air temperature.}\label{fig:air_holdout}
\end{center}
\end{figure}
\vspace{-20mm}
\begin{figure}[H]
\begin{center}
\includegraphics[width=.8\textwidth]{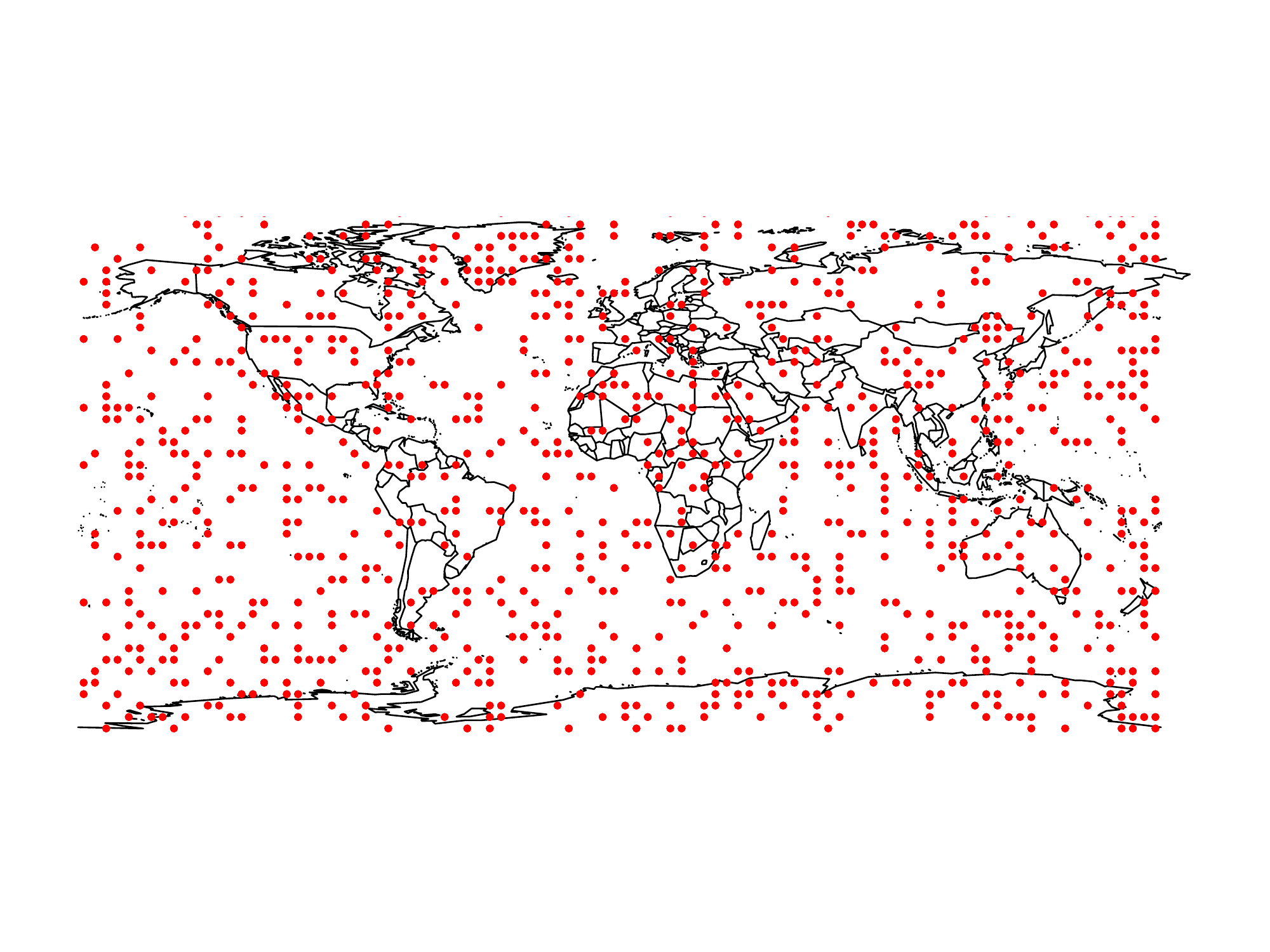}
\vspace{-20mm}
\caption{Hold-out locations used for predictive performance for the total cloud coverage dataset.}\label{fig:cloud_holdout}
\end{center}
\end{figure}

\bibliographystyle{apalike}
\bibliography{refs}

\end{document}